\newtheorem{theorem}{Theorem}[section]
\newtheorem{lemma}[theorem]{Lemma}
\newtheorem{corollary}[theorem]{Corollary}
\newtheorem{observation}[theorem]{Observation}
\newcommand{\arcCost}[1]{\texttt{cost}(#1)}
\newcommand{\arcCap}[1]{\texttt{cap}(#1)}
\newcommand{\cS}{\mathcal{S}}
\newcommand{\cP}{\mathcal{P}}
\newcommand{\cM}{\mathcal{M}}
\newcommand{\card}[1]{\left|#1\right|}
\newcommand{\payoff}[1]{\textup{\texttt{payoff}}(#1)}
\newcommand{\overlap}[2]{\textup{\texttt{ovlp}}_{#1}(#2)}
\newcommand{\Mshared}[1][]{\cM \if!#1!\else (#1) \fi}
\newcommand{\Mpriv}{\cP}
\newcommand{\complTime}[3]{C_{#1}^{#3}(#2)}  
\newcommand{\tct}[1]{\varSigma(#1)}                  
\newcommand{\reals}{\mathbb{R}}
\newcommand{\jobs}{\mathcal{J}}
\newcommand{\Sopt}{\cS_{\textup{opt}}}
\newcommand{\nextEps}[1]{\textup{\texttt{next}}(#1)}
\newcommand{\w}[1]{w_{#1}}
\newcommand{\p}[1]{p_{#1}}
\newcommand{\cm}[1]{c_{#1}}
\newcommand{\procStart}{\medskip\hrule\vspace*{3pt}\noindent}
\newcommand{\procEnd}{\vspace*{1pt}\hrule\medskip}
\newcommand{\procSequential}{\textup{\texttt{MakeSequential}}}
\newcommand{\procTransfer}{\textup{\texttt{Transfer}}}
\begin{document}

\title{\textbf{Shared Processor Scheduling of Multiprocessor Jobs}}

\author{
  Dariusz Dereniowski\footnote{Corresponding author. Email: deren@eti.pg.edu.pl}\\
  \small{\emph{Faculty of Electronics,}}\\
  \small{\emph{Telecommunications and Informatics},}\\
  \small{\emph{Gda{\'n}sk University of Technology},}\\
  \small{\emph{Gda{\'n}sk, Poland}}
\and
  Wies{\l}aw Kubiak\\
  \small{\emph{Faculty of Business Administration},}\\
  \small{\emph{Memorial University},}\\
  \small{\emph{St. John's, Canada}}
}


\maketitle

\begin{abstract} We study shared processor scheduling of \emph{multiprocessor} weighted jobs where each job can be executed on its private processor and simultaneously on possibly \emph{many} processors shared by all jobs in order to reduce their completion times due to processing time overlap. Each of $m$ shared processors may charge different fee but otherwise the processors are identical. The total weighted overlap of all jobs is to be maximized. This problem is key to subcontractor scheduling in extended enterprises and supply chains, and divisible load scheduling in computing. We prove that, quite surprisingly, \emph{synchronized} schedules that complete each job using shared processors at the same time on its private and shared processors include optimal schedules. We show that optimal \emph{$\alpha$-private} schedules that require each job to use its private processor for at least $\alpha=1/2+1/(4(m+1))$ of the time required by the job guarantee more than an $\alpha$ fraction of the total weighted overlap of the optimal schedules. This gives an $\alpha$-approximation algorithm that runs in strongly polynomial time for the problem, and improves the $1/2$-approximation reported recently in the literature to $5/8$-approximation for a single shared processor problem. The computational complexity of the problem, both single and multi-shared processor, remains open. We show however an LP-based optimal algorithm for \emph{antithetical} instances where for any pair of jobs $j$ and $i$, if the processing time of $j$ is smaller than or equal to the processing time of $i$, then the weight of $j$ is greater than or equal to the weight of $i$.
\end{abstract}

\textbf{Keywords:} discrete optimization, subcontracting, supply chains, extended enterprises, shared processors

\section{Introduction}

Quick-response industries are characterized by volatile demand and inflexible capacities. The agents (companies) in such industries  need to supplement their \emph{private} capacity by adapting their 
 extended enterprises and supply chains to include 
subcontractors with their own capacity.  This capacity of subcontractors however is often \emph{shared} between other independent supply chains which can cause undesirable and difficult to control bottlenecks in those supply chains. A well-documented real-life example of this issue has been reported in Boeing's Dreamliner supply chain where the overloaded schedules of subcontractors, each working with multiple suppliers, resulted in long delays in the overall production due dates, see Vairaktarakis~\cite{V13}. 

The use of subcontractor's shared processor (capacity) benefits an agent only if it can reduce the agent's job (order) completion time at a competitive enough cost. Hence, the subcontractor's shared processor should never be used, and paid for by the agent, as long as the agent's private processor remains available. We reasonably assume that the cost of using subcontractor's processor is higher than this of the agent's private processor. Moreover the agent's private processor should never remain idle as long as the agent's job remains unfinished. Therefore, only a simultaneous execution, or \emph{overlap}, on both private and shared processors reduces completion time. The total  (weighted) overlap is the objective function studied
in this paper. This objective function is closely related to the total completion time objective traditionally used in scheduling. The total completion time can be reduced by an increase of the total overlap resulting from the simultaneous execution of jobs on private and shared processors. However,  we need to emphasize that the two objectives exist for different practical reasons. The minimization of total completion time minimizes mean flow time and thus by Little's Law minimizes average inventory in the system. The maximization of the total overlap on the other hand maximizes the total net payoff resulting from completing jobs earlier  thanks to the use of shared processors (subcontractors). This different focus sets the total overlap objective apart from the total completion time objective, and makes it a key objective in scheduling shared processors, Dereniowski and Kubiak~\cite{DK16}. 

The reduction of completion time of a job due to the overlap depends on whether only a single shared processor or multiple shared processors can be used simultaneously by the job. For instance, an order of size 12 can be completed in 6 units of time (assuming it takes one unit of time to complete the order of size one) at the earliest if only a single shared processor is allowed to process the order simultaneously with private processor. The order is then split in half between the private and the shared processor both working simultaneously on the job in the time interval $(0,6)$. The resulting overlap equals $6$, and the job is not executed by any other shared processor in the interval.  This constraint has been imposed in the literature thus far, see Vairaktarakis and Aydinliyim~\cite{VairaktarakisAydinliyim07},  Hezarkhani and Kubiak~\cite{HK15}, and  Dereniowski and Kubiak~\cite{DK17},  \cite{DK16}.  We refer to the constraint as a \emph{single processor} (\emph{SP}) job mode.
This paper relaxes the constraint and permits a job to be processed simultaneously on its private and possibly \emph{more} than one shared processor. For instance, the job of size 12 can be completed in 4 units of time by executing it in the interval $(0,4)$ on its private processor and simultaneously in the intervals $(0,4)$, $(0,3)$ and $(1,2)$ on three different shared processors. The resulting total overlap equals $8$. We refer to this relaxation as a \emph{multiprocessor} (\emph{MP}) job mode.  To our knowledge this mode of execution of jobs has been first studied by Blazewicz, Drabowski and Weglarz~\cite{BDW86} and refereed to as multiprocessor jobs in the literature. The multiprocessor jobs gained prominence in distributed computing where the processing by the nodes of a shared network of processors as well as possible communications between the nodes overlap in time so that the completion time (makespan) for the whole job (referred to as divisible load) is shorter than the processing of the whole load by a single node, Bharadwaj, Ghose, and Robertazzi~\cite{HBGR03}.
Bharadwaj, Ghose, and Robertazzi~\cite{HBGR03} and Drozdowski~\cite{D09} survey many real-life applications that satisfy the divisibility property. 

The shared processors may also charge different fees, $c_i$, and the jobs may have different weights $w_j$.
Then, the contribution to the total payoff of a job piece of length $l$ is $l$ times the difference between its weight and the shared processors' fee.
Thus, the execution in the interval $(0,4)$ on one shared processor may cost the same as the execution in the interval $(1,2)$ on another shared processor if the latter is four times more expensive than the former. The shared processors with different costs will studied in this paper.
Figure~\ref{fig:example2} illustrates the difference between the two modes, $SP$ and $MP$.

\begin{figure}[ht!]
\begin{center}
 \includegraphics[scale=1.1]{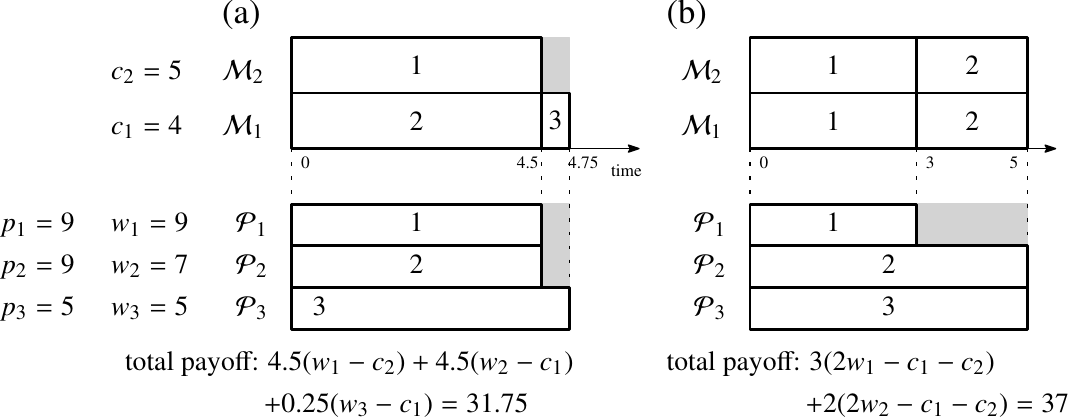}
\end{center}
\caption{The example illustrates that allowing jobs to be executed in \emph{MP} mode simultaneously on several shared processors ($\cM_1$ and $\cM_2$) may be beneficial for some problem instances: (a) an optimal (synchronized) schedule for the \emph{SP} mode; (b) an optimal (synchronized) schedule for the \emph{MP} mode.
In this input instance, the shared processors' fees are $c_1=4$ and $c_2=5$, the jobs' processing times are $p_1=p_2=9$, $p_3=5$ and their weights are $w_1=9$, $w_2=7$, $w_3=5$.}
\label{fig:example2}
\end{figure}

It is quite remarkable that regardless of the mode of job execution on shared processors, and job weights there always exist optimal schedules that are \emph{synchronized}, i.e., each agent using shared processors has its job completed on private and shared processors at the same time (for a formal definition of synchronized schedules for the \emph{MP} mode see Section~\ref{sec:outline}). This has been shown for the \emph{SP} job mode by Vairaktarakis and Aydinliyim  \cite{VairaktarakisAydinliyim07},  Hezarkhani and Kubiak~\cite{HK15}, and  Dereniowski and Kubiak~\cite{DK16}, and \cite{DK17}. In this paper we show it for the \emph{MP} job mode. We return to the paper outline later in the introduction in Section~\ref{sec:outline} to give more details.

\subsection{Related Work and Applications}

The shared processor scheduling problem with a \emph{single} shared processor has been studied by  Vairaktarakis and Aydinliyim  \cite{VairaktarakisAydinliyim07},  Hezarkhani and Kubiak \cite{HK15}, and  Dereniowski and Kubiak 
\cite{DK17}.
Vairaktarakis and Aydinliyim  \cite{VairaktarakisAydinliyim07} consider the \emph{unweighted} problem with each job allowed to use at most  one time interval on the shared  processor. This case is sometimes referred to as \emph{non-preemptive} since jobs are not allowed preemption on the shared processor.  \cite{VairaktarakisAydinliyim07} proves that there are optimal schedules that complete job execution on its private and the shared processor at the same time, we call such schedules \emph{synchronized}.
It further shows that this guarantees that sequencing jobs in non-decreasing order of their processing times leads to an optimal solution for the case. We refer to such schedules as \emph{processing time ordered}, see    \cite{DK17}. Interestingly, the processing time ordered schedules guarantee that each job uses exactly one nonempty time interval on the shared processor. \cite{HK15} observes that the processing time ordered schedules also give optimal solutions to the \emph{preemptive} unweighted problem, 
where more than one interval can be used by a job on the shared processor.  \cite{DK17}  considers the \emph{weighted} problem. It observes that for the weighted problem it no longer holds that each job occupies a non-empty interval on the shared processor in optimal schedules, there may exist jobs processed on their private processors only in each optimal schedule. It shows that there always exist optimal schedules that are synchronized, gives a $\frac{1}{2}$- approximation algorithm for the problem, and shows that the $\frac{1}{2}$ bound for the algorithm is tight. It also extends earlier result for the unweighted problem by proving  that the processing time ordered schedules are optimal for antithetical instances, i.e., the ones for which there exists ordering of jobs that is simultaneously non-decreasing with respect to processing times and non-increasing with respect to the weights. The complexity status of the weighted problem with a single shared processor remains open.

Vairaktarakis and Aydinliyim \cite{VairaktarakisAydinliyim07}, Vairaktarakis  \cite{V13}, and Hezarkhani and Kubiak \cite{HK15} focus on the \emph{tension} between the agents and the subcontractor in the decentralized system where each agent strives to complete its job as early as possible and needs to compete with other agents for the shared processor, and the subcontractor who strives to have the shared processor occupied as long as possible to maximize its payoff. The tension calls for coordinating mechanisms to ensure the efficiency. Hezarkhani and Kubiak \cite{HK15} show such coordination mechanism for the unweighted problem, and give examples to prove that such mechanisms do not exist for the problem with weighted jobs.

Dereniowski and Kubiak \cite{DK16} consider shared \emph{multi-processor} problem. They however, contrary to this paper, assume that each job can only be processed by its private processor and at most \emph{one} out of many shared processors, the \emph{SP} mode. Besides no distinction is made between the shared processors, in particular the costs of using  shared processors are the same for all of them. \cite{DK16} proves that synchronized optimal schedules always exist for weighted multi-processor instances. However, the wighted problem is NP-hard in the strong sense. For the multi-processor problem with equal weights for all jobs, \cite{DK16} gives an efficient, polynomial-time algorithm running in time $O(n \log n)$.

The motivation to study the shared processor scheduling problem comes from diverse applications. Vairaktarakis and Aydinliyim  \cite{VairaktarakisAydinliyim07}, \cite{TAGV} consider it in the context of  supply chains and extended enterprises where subcontracting allows jobs to reduce their completion times by using a shared subcontractor's processor. Bharadwaj et. al.  \cite{HBGR03}, and Drozdowski~\cite{D09}  use the divisible load scheduling to reduce a job completion time in parallel and distributed computer systems, and  Anderson \cite {A81} argues for using batches of potentially infinitely small items that can be processed independently of other items of the batch in scheduling job-shops. We refer the reader to Dereniowski and Kubiak \cite{DK16} for more details on these applications.

\subsection{Problem Formulation} \label{sec:problem}

We are given a set $\jobs$ of $n$ preemptive jobs with non-negative processing times $\p{j}$ and weights $\w{j}$, $j\in\jobs$.
With each job $j\in\jobs$ we associate its \emph{private} processor denoted by $\Mpriv_j$.
Moreover, $m\geq 1$ \emph{shared} processors $\Mshared_1,\ldots,\Mshared_m$ are available for all jobs; processor $\Mshared_i$ has \emph{cost} $\cm{i}$, $i\in\{1,\ldots,m\}$. Without loss of generality we always assume $c_1\leq \cdots\leq c_m$ in this paper.

A schedule $\cS$ selects for each job $j\in\jobs$:
\begin{enumerate} [label={\normalfont{(\roman*)}}]
 \item\label{it:d1} a (possibly empty) collection of open and pairwise disjoint maximal time intervals $I_{i,j}^1,\ldots,I_{i,j}^{l(i,j)}$ in which the job $j$ executes on shared processor $\Mshared_i$ for each $i\in\{1,\ldots,m\}$, any $I_{i,j}^{k}$ is called a \emph{piece} of job $j$ on processor $i$ or simply \emph{piece} of job $j$ if the processor is obvious from the context, and
 \item\label{it:d2} a \emph{single} time interval $(0, \complTime{\cS}{j}{\Mpriv})$ in which $j$ executes on its private processor $\Mpriv_j$.
\end{enumerate}
In a \emph{feasible} schedule, the total length of all these intervals (the ones in~\ref{it:d1} and the one in~\ref{it:d2}) equals $\p{j}$:
\begin{equation} \label{L1}
\p{j} = \complTime{\cS}{j}{\Mpriv} + \sum_{i=1}^{m}\sum_{k=1}^{l(i,j)}\card{I_{i,j}^{k}}
\end{equation}
for each $j\in\jobs$.
Moreover, each shared processor $i$ can execute at most one job at a time, i.e. the unions of  all pieces of different jobs $j$ and $j'$ on processor $i$ are disjoint for each shared processor $\Mshared_i$, $i\in\{1,\ldots,m\}$, or formally 
\begin{equation} \label{L2}
\left(\bigcup_{k=1}^{l(i,j)} I_{i,j}^k\right) \cap \left(\bigcup_{k'=1}^{l(i,j')} I_{i,j'}^{k'}\right) = \emptyset,
\end{equation}
for any two different jobs $j$ and $j'$ and any shared processor $\Mshared_i$, $i\in\{1,\ldots,m\}$. Without loss of generality we also assume that
\begin{equation} \label{L3}
I_{i,j}^{k} \subseteq (0, \complTime{\cS}{j}{\Mpriv})
\end{equation}
for each $i$, $j$, and $k$ in a feasible schedule.

We re-emphasize that, contrary to earlier literature on shared multi-processor scheduling \cite{DK16}, we allow the pieces of the same job \emph{not} to be disjoint (or simply to overlap) on different shared processors like in the \emph{MP} job mode introduced in \cite{BDW86} and used in \cite{HBGR03} and \cite{D09} for instance. Observe that the private processor $\Mpriv_j$ can only execute job $j$ but none of the other jobs.

\medskip
Given a feasible schedule $\cS$, for each job $j\in\jobs$ we call any pair $(\Mshared_i,I)$ an \emph{overlap of} $j$ \emph{on} $\Mshared_i$ if $I$ is a time interval of maximum length where  $j$ executes on both its private processor $\Mpriv_j$ and the shared processor $\Mshared_i$ simultaneously; we say that $\card{I}$ is the \emph{length} of the overlap $(\Mshared_i,I)$.
Note that $I\subseteq(0,\complTime{\cS}{j}{\Mpriv})$.
Then, the \emph{total overlap} $\overlap{\cS}{j,\Mshared_i}$ of  $j$ on $\Mshared_i$ equals the sum of lengths of all overlaps of $j$ on $\Mshared_i$.
The \emph{total weighted overlap} of $\cS$ equals
\begin{equation} \label{L4}
\tct{\cS}=\sum_{i=1}^{m}\sum_{j\in\jobs}\overlap{\cS}{j,\Mshared_i}(\w{j}-\cm{i}).
\end{equation}

To illustrate we give an example in Figure~\ref{fig:example}. The example also gives some intuitions as to how optimal schedules look like --- for more details see a discussion at the end of the next section.
\begin{figure}[ht!]
\begin{center}
 \includegraphics[scale=1.1]{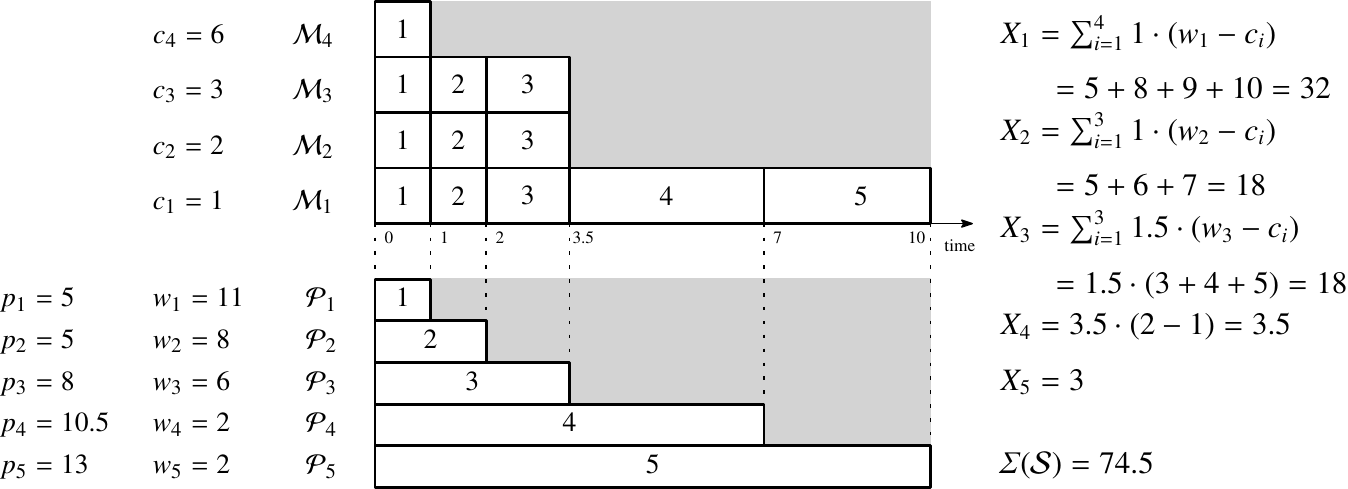}
\end{center}
\caption{An optimal schedule for an input instance with $\jobs=\{1,\ldots,5\}$ and $4$ shared processors $\Mshared_1,\ldots,\Mshared_4$. Here $X_j$ denotes the contribution of a job $j$ to the total weighted overlap of the schedule, $X_j=\sum_{i=1}^m\overlap{\cS}{j,\Mshared_i}(\w{j}-\cm{i})$}
\label{fig:example}
\end{figure}

\subsection{Outline}\label{sec:outline}

The main structural property of optimal schedules proved in this paper is schedule synchronization.
We say that a feasible schedule $\cS$ is \emph{synchronized} if there exists a subset of jobs $\{j_1,\ldots,j_k\}\subseteq\jobs$ such that:
\begin{itemize}
 \item each job $j\notin\{j_1,\ldots,j_k\}$ executes only on its private processor $\Mpriv_j$ in $(0,p_j)$ in $\cS$,
 \item there exist $m\geq m_1\geq m_2\geq\cdots\geq m_k\geq 1$ and $0=t_0\leq t_1\leq\cdots\leq t_k$ such that job $j_i$, $i\in\{1,\ldots,k\}$, executes non-preemptively in time interval $(t_{i-1},t_i)$ on each shared processor $\Mshared_l$, $l\in\{1,\ldots,m_i\}$, and in $(0,t_i)$ on $\Mpriv_{j_k}$  in $\cS$.
\end{itemize}
Figure~\ref{fig:example} gives an example of a synchronized schedule of five jobs where $m_1=4$, $m_2=m_3=3$, and $m_4=m_5=1$, and $t_1=1$, $t_2=2$,
$t_3=3.5$, $t_4=7$, and $t_5=10$. Observe that the total weighted overlap equals
\[\tct{\cS}=\sum_{i=1}^{k}(t_i-t_{i-1})\left(m_i\w{j_i}-\sum_{l=1}^{m_i}\cm{l}\right),\]
for a synchronized $\cS$, and by the feasibility of $\cS$
\[t_i+m_i(t_i-t_{i-1})=p_{j_i}.\]
From these two we get a natural interpretation of the objective function for synchronized $\cS$ which is summarized in the following formula
\[\tct{\cS}=\sum_{i=1}^{k}(p_{j_i}-t_{i})\left(\w{j_i}-\frac{\sum_{l=1}^{m_i}\cm{l}}{m_i}\right),\]
where $p_{j_i}-t_{i}$ is the \emph{total} time $j_i$ is executed on the $m_i$ cheapest shared processors $\Mshared_1,\ldots,\Mshared_{m_i}$, and the $(\sum_{l=1}^{m_i}\cm{l})/m_i$ is the \emph{average} cost of execution on these processors which must not exceed the weight, $w_{j_i}$, of the job $j_i$ if $\cS$ is to be optimal. Our main structural result in this paper is as follows.

\begin{theorem} \label{thm:synchronized}
There always exists an optimal schedule that is synchronized.
\end{theorem}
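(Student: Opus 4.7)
The plan is to start from an arbitrary optimal schedule $\cS^*$ and transform it into a synchronized schedule through a sequence of local exchange operations, each preserving feasibility and not decreasing $\tct{\cS^*}$. The target structure of Section~\ref{sec:outline} has three features that I would install in three successive normalization phases: (a)~at every instant, the set of shared processors used by a single job is a prefix of the cost ordering, (b)~the pieces on each shared processor are consolidated into one interval per job and arranged in a common job ordering across all processors, and (c)~the end of each job's shared execution coincides with $\complTime{\cS^*}{j}{\Mpriv}$.

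First, I would establish the cost-prefix property. Because $\cm{1}\leq\cdots\leq \cm{m}$, whenever a job $j$ is executed on $\Mshared_i$ at a time $t$ at which a cheaper $\Mshared_{i'}$, $i'<i$, is idle, reassigning that infinitesimal slice to $\Mshared_{i'}$ gains $(\cm{i}-\cm{i'})$ per unit time in~(\ref{L4}) without affecting (\ref{L1})--(\ref{L3}). Combined with swaps that exchange simultaneous pieces of two different jobs between $\Mshared_i$ and $\Mshared_{i'}$ (which is objective-neutral on the expensive processor and strictly beneficial on the cheap one when weights differ), this drives every optimal schedule to one in which the set of shared processors busy with a given job at each instant is precisely $\{\Mshared_1,\ldots,\Mshared_{m_j(t)}\}$ for some $m_j(t)$; unprofitable pieces on processors with $\cm{i}>\w{j}$ can simply be deleted.

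Next I would order the jobs that use shared processors as $j_1,\ldots,j_k$ by non-decreasing $\complTime{\cS^*}{j}{\Mpriv}$ and consolidate. Within each shared processor $\Mshared_i$, any two pieces of different jobs $j_a$ and $j_b$ that appear in reversed order can be swapped without changing~(\ref{L4}), provided the resulting intervals remain inside the respective $(0,\complTime{\cS^*}{j}{\Mpriv})$ windows, which follows from the chosen ordering. Iterating such swaps on each $\Mshared_i$, together with gluing the resulting adjacent pieces of the same job, reduces every job to a single interval on each processor it uses and aligns these intervals across processors into a common block $(t_{i-1},t_i)$ for $j_i$. The prefix property forces $m_1\geq m_2\geq\cdots\geq m_k$: a later job $j_i$ cannot be busy on a processor after the job that preceded it on that processor has already finished.

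Finally, to achieve synchronization I would argue that if $\complTime{\cS^*}{j_i}{\Mpriv}$ is strictly greater than $t_i$, the shared block on the first $m_i$ processors can be stretched to end exactly at $\complTime{\cS^*}{j_i}{\Mpriv}$, strictly increasing overlap whenever $\w{j_i}>(\sum_{l=1}^{m_i}\cm{l})/m_i$ (otherwise such a job would be dropped as in phase one), while if $\complTime{\cS^*}{j_i}{\Mpriv}<t_i$, the excess on the shared processors lies outside $(0,\complTime{\cS^*}{j_i}{\Mpriv})$ by~(\ref{L3}) and contributes nothing, hence can be truncated. The main obstacle is the second phase: the \emph{MP} mode lets a single job sit simultaneously on several processors, so the swap that consolidates pieces on $\Mshared_i$ can conflict with the prefix property enforced on $\Mshared_{i'}$, and the two phases must be intertwined. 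To certify termination I would introduce a monovariant — for example, the lexicographic pair consisting of the total number of pieces and the sum over pieces of their starting times — and verify that each exchange strictly decreases it while weakly increasing $\tct{\cdot}$, which both rules out cycling and guarantees convergence to a schedule of the staircase form required by Theorem~\ref{thm:synchronized}.
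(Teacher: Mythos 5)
Your phases (a) and (b) — closing idle gaps toward cheaper processors and reordering pieces within a busy block by non-decreasing private completion time via objective-neutral swaps — correspond to the paper's Procedure \procSequential{} and Corollary~\ref{cor:segments}, and that part of your argument is essentially sound. The genuine gap is in phases (b) and (c), and it is the heart of the theorem. First, adjacent swaps of out-of-order pieces do not remove \emph{splits}: a job $j$ may have pieces in two different segments separated by jobs whose own windows $(0,\complTime{\cS}{j'}{\Mpriv})$ block any rearrangement that would merge $j$'s pieces, and merging them necessarily changes \emph{where} on the time axis work is done, not just its order. Second, and more seriously, your synchronization step treats "stretch the shared block of $j_i$ to end at $\complTime{\cS}{j_i}{\Mpriv}$" as a local improvement. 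It is not local: by~(\ref{L1}), putting more of $j_i$ on shared processors forces $\complTime{\cS}{j_i}{\Mpriv}$ to decrease (the target you are stretching toward moves), and moving the right endpoint of $j_i$'s block displaces the start of $j_{i+1}$'s block, which changes how much of $j_{i+1}$ is on shared versus private processors, which moves $\complTime{\cS}{j_{i+1}}{\Mpriv}$, and so on through every subsequent job. The net change in $\tct{\cdot}$ is a signed sum over all downstream jobs weighted by the processor widths (the paper's~\eqref{eq:payoff}), and it can be negative even when $\w{j_i}$ exceeds the average cost of the processors $j_i$ uses. So the claim "strictly increasing overlap" is false in general; your truncation case likewise violates~(\ref{L1}) unless the truncated work is pushed back onto the private processor, which again cascades.

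The paper's Section~\ref{sec:structure} exists precisely to close this gap: the multi-step modification $\xi$ propagates a shift $\varepsilon$ through all later pieces, the total payoff is shown to be \emph{linear} in $\varepsilon$ with coefficient $R(\cS,t)$ (Corollary~\ref{cor:rate}), optimality forces $R(\cS,t)=0$ at any desynchronization or split (Lemmas~\ref{lem:rate} and~\ref{Rzero}), and progress is then made not by increasing the objective but by choosing, among optimal processor-descending sequential schedules, one of \emph{minimum makespan} and showing that the $\varepsilon<0$ direction strictly shortens the schedule at no cost. Your proposed monovariant (number of pieces, sum of starting times) governs only the discrete swap phases and says nothing about the continuous cascading adjustments; you would need something playing the role of the zero-rate-plus-minimum-makespan argument to finish. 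As written, the proof does not go through.
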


The theorem follows immediately from Corollary \ref{cor:segments} and Lemma \ref{lem:no-splits} in Section~\ref{sec:structure}. Though our algorithmic results, both optimization and approximation, do not require the schedule synchronization property directly, the property comes useful to prove some key properties of the algorithms. Therefore we start with synchronization here but leave technical details of its proof for  Section~\ref{sec:structure}.

The existence of optimal schedules that are synchronized is consistent with earlier results~\cite{VairaktarakisAydinliyim07, DK16,DK17,HK15} for the \emph{SP} job mode. However, the proof for the \emph{MP} job mode turns out to be more challenging. 
We note that both \emph{SP} and \emph{MP} modes are equivalent in the single processor case, $m=1$.

The computational complexity status of the problem remains a challenging open question. However, we show in Section \ref{sec:LP} that for any given permutation of job completions on private processors, recall that by definition there is exactly one job on any private processor,
an LP can be formulated that finds a schedule maximizing the total weighted overlap among all schedules that respect this permutation. 
This result points at a certain strategy in solving the problem which is to search for the duration each job must be executed on its private processor. The difficulty in establishing computational complexity status for the problem however indicates that the optimal durations will be also difficult to find. We therefore propose to \emph{relax} the strategy by requiring that each duration be at \emph{least} $\alpha>1/2$  fraction of  job processing time, i.e., we require that the completion of a job $j$ is limited to occur in the interval $[\alpha \p{j}, p_j]$. We naturally refer to such schedules as \emph{$\alpha$-private} since they require that at least $\alpha$ fraction of each job is executed of its private processor. The optimal $\alpha$-private schedules can be found in strongly polynomial time by solving a minimum-cost network flow problem, see Orlin \cite{O93} for a strongly polynomial algorithm for the minimum-cost network flow problem. These schedules are important since they guarantee that their total weighted overlaps are \emph{not less} than $\alpha$ of the maximum total weighted overlaps which results  in a $\alpha$-approximation strongly polynomial time algorithm for the problem. This is shown in Section~\ref{sec:approximation}, where we also show that the best $\alpha$ we can find in this paper is $\frac{1}{2}+\frac{1}{4(m+1)}$. The $\alpha$-approximation algorithm improves
a $1/2$-approximation algorithm for a single processor~\cite{DK17} to a $5/8$-approximation algorithm for $m=1$. 

Section~\ref{sec:antithetical} considers  \emph{antithetical} instances where $\p{j}\geq\p{j'}$ implies $\w{j}\leq\w{j'}$ for each pair of jobs $j$ and $j'$.
We prove that a permutation of job completions on private processors that coincides with non-decreasing order of processing times (or equivalently with non-increasing order of weights) is optimal
for the antithetical instances. The LP of Section \ref{sec:LP} then finds an optimal schedule for the permutation which gives a polynomial time algorithm for the antithetical instances.
Note that even though the permutation of job completions on private processors is fixed, the LP still needs to optimally decide how many shared processors should a job use. In particular, generally using more of them may decrease the job's contribution to the total weighted overlap of the schedule since the job uses more costly shared processors, on the other hand using more shared processors may reduce
the job completion time thus allowing jobs that follow it in the permutation to start earlier and thus to contribute proportionally more to the total weighted overlap of the schedule. 
Figure~\ref{fig:example}  illustrates this tradeoff for an antithetical instance, for example job 1 executed on all 4 shared processors finishes at $t_1=1$ and contributes 32 to the total weighted overlap, the same job executed on the cheapest three shared processors would finish later at $t_1=\frac{5}{4}$ however it would contribute $33\frac{3}{4}$ to  the total weighted overlap. The optimal solution for the instance choses the former which speeds up starting time of the remaining four jobs by $\frac{1}{4}$ in comparison to the latter. 
Generally this tradeoff is optimally handled by the LP, however it remains open whether an optimal schedule can be found by a more efficient or strongly polynomial time algorithm different from LP.

\section{An LP Formulation} \label{sec:LP}

In this section we give an LP formulation that takes as an input an instance of the problem with $n$ jobs $\jobs=\{1,\ldots,n\}$ having weights $w\colon\jobs\to\reals_+$, processing times $p\colon\jobs\to\reals_+$,  $m$ shared processors with costs $\cm{1},\ldots,\cm{m}$, and a permutation $A=(1,\ldots,n)$ of the jobs in $\jobs$. The permutation $A$ provides an order according to which the jobs finish on their private processors. The LP finds a schedule that maximizes the total weighted overlap among all $A$-compatible schedules.
We say that, for a permutation of jobs $A=(1,\ldots,n)$, a schedule $\cS$ is \emph{$A$-compatible} if $\complTime{\cS}{j}{\Mpriv}\leq\complTime{\cS}{j+1}{\Mpriv}$ for each $j\in\{1,\ldots,n-1\}$.

The variables used in the LP are as follows.
For each $i\in\{1,\ldots,n\}$, a variable $t_j$ is the completion time of job $j$ on its private processor.
For each $j\in\{1,\ldots,n\}$, $k\in\{1,\ldots,j\}$ and $i\in\{1,\ldots,m\}$, a variable $x_{jik}$ is the total amount of job $j$ executed on the shared processor $\Mshared_i$ in the time interval $(t_{k-1},t_{k})$.

\medskip
The LP is as follows:
\begin{equation} \label{LP:objective+}
\textup{maximize} \quad f=\sum_{j=1}^{n}\sum_{i=1}^{m}\sum_{k=1}^{j}(\w{j} -\cm{i}) x_{j i k}
\end{equation}
subject to:
\begin{equation} \label{LP:t+}
t_{0} =0 \leq t_{1} \leq \cdots  \leq t_{n},
\end{equation}
\begin{equation} \label{LP:exclusion+}
\sum_{j=k}^{n}x_{j i k} \leq t_{k} -t_{k-1}, \quad i\in\{1,\ldots,m\}, k\in\{1,\ldots,n\},
\end{equation}
\begin{equation} \label{LP:completion+}
\sum_{i=1}^{m} \sum_{k=1}^{j} x_{j i k} = \p{j} -t_{j}, \quad j\in\{1,\ldots,n\},
\end{equation}
\begin{equation} \label{LP:non-negative+}
x_{j i k}\geq 0, \quad j\in\{1,\ldots,n\}, k\in\{1,\ldots,j\}, i\in\{1,\ldots,m\}.
\end{equation}

For a solution to the  LP, we define the following \emph{corresponding} schedule $\cS$.
For each job $j\in\{1,\ldots,n\}$, let $\complTime{\cS}{j}{\Mpriv}=t_j$.
For each $k\in\{1,\ldots,j\}$ and $i\in\{1,\ldots,m\}$, execute a piece of job $j$ of duration $x_{j i k}$ on shared processor $\Mshared_i$ in time interval $(t_{k-1},t_{k})$.
These job pieces are executed in $(t_{k-1},t_k)$ on $\Mshared_i$ so that there is no overlap between them.

\begin{lemma} \label{lem:LP}
For each feasible solution to LP  the corresponding schedule $\cS$ is feasible, $A$-compatible and such that $f=\tct{\cS}$, and for each feasible and $A$-compatible schedule $\cS$  there is a feasible solution to LP such that $\tct{\cS}=f$.
\end{lemma}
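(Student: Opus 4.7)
The plan is to prove the two directions of the equivalence separately, with each direction amounting to verifying that the LP constraints correspond exactly to the feasibility and $A$-compatibility conditions from the problem formulation, and that the objective value coincides with the total weighted overlap.

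For the forward direction, given a feasible LP solution I would first argue that the corresponding schedule $\cS$ is well-defined: since constraint~(\ref{LP:exclusion+}) bounds the total length of pieces scheduled in $(t_{k-1},t_k)$ on $\Mshared_i$ by $t_k-t_{k-1}$, these pieces can indeed be packed into that interval without overlap, giving~(\ref{L2}). Property~(\ref{L3}) follows because the range of $k$ in the definition is $\{1,\ldots,j\}$, so every piece of $j$ lies in $(0,t_j)=(0,\complTime{\cS}{j}{\Mpriv})$. Equation~(\ref{L1}) is a direct rewriting of~(\ref{LP:completion+}). $A$-compatibility is immediate from~(\ref{LP:t+}). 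The crucial observation for matching the objective is that, since $j$ runs on $\Mpriv_j$ throughout $(0,t_j)$ and every piece of $j$ on $\Mshared_i$ lies in that interval, the \emph{entire} length of each such piece contributes to the overlap. Hence $\overlap{\cS}{j,\Mshared_i}=\sum_{k=1}^j x_{jik}$, and summing the expression in~(\ref{L4}) reproduces the objective~(\ref{LP:objective+}) term by term.

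For the reverse direction, given a feasible $A$-compatible schedule $\cS$ I would set $t_j:=\complTime{\cS}{j}{\Mpriv}$ and define $x_{jik}$ as the total length of pieces of $j$ executed on $\Mshared_i$ inside $(t_{k-1},t_k)$. Because of~(\ref{L3}) combined with $A$-compatibility, no piece of $j$ lies in any $(t_{k-1},t_k)$ with $k>j$, so $x_{jik}=0$ outside the indicated range and no information about the schedule is lost. Constraint~(\ref{LP:t+}) is then precisely $A$-compatibility; constraint~(\ref{LP:exclusion+}) follows from~(\ref{L2}) applied to the interval $(t_{k-1},t_k)$ on processor $\Mshared_i$; constraint~(\ref{LP:completion+}) is a rewriting of~(\ref{L1}); and non-negativity~(\ref{LP:non-negative+}) is trivial. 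The equality $\tct{\cS}=f$ follows from the same overlap-equals-piece-length observation used above.

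There is no real obstacle here, but the one point that requires a little care is justifying the identity $\overlap{\cS}{j,\Mshared_i}=\sum_k x_{jik}$ in both directions: one has to invoke the fact that the private processor is busy with $j$ on the entire interval $(0,t_j)$ to conclude that every shared-processor piece of $j$ is \emph{fully} an overlap, rather than being only partially simultaneous with the private processor's execution. Once this is stated, the lemma reduces to matching the two sets of constraints line by line.
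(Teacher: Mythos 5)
Your proposal is correct and follows essentially the same route as the paper's proof: both directions are established by matching constraints \eqref{LP:t+}--\eqref{LP:non-negative+} line by line against \eqref{L1}--\eqref{L3} and $A$-compatibility, and the objective is identified with the total weighted overlap via the identity $\overlap{\cS}{j,\Mshared_i}=\sum_{k=1}^{j}x_{jik}$. Your explicit justification of that identity (every shared piece of $j$ lies in $(0,t_j)$ where $\Mpriv_j$ is busy) and of why $x_{jik}=0$ for $k>j$ in the reverse direction are points the paper states only implicitly, but the argument is the same.
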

\begin{proof}
Let $x_{j i k}$, $j\in\{1,\ldots,n\}$, $k\in\{1,\ldots,j\}$, $i\in\{1,\ldots,m\}$ and $t_{k}$, $k\in\{1,\ldots,n\}$ be a solution to the LP.  We first prove that a corresponding schedule $\cS$ is feasible and $A$-compatible.
For each $k\in\{1,\ldots,n\}$ and $i\in\{1,\ldots,m\}$, executing a piece of job $j$ of duration $x_{j i k}$ on shared processor $\Mshared_i$ in time interval $(t_{k-1},t_{k})$  is feasible since~\eqref{LP:exclusion+} ensures that the duration $x_{j i k}$ does not exceed the length of the interval. 
Also by~\eqref{LP:exclusion+} and by~\eqref{LP:non-negative+}, the length of the interval $(t_{k-1},t_{k})$ is sufficient to execute all job pieces of length $x_{j i k}$, $j\in\{1,\ldots,n\}$, on each shared processor $\Mshared_i$. Hence (\ref{L2}) is satisfied by $\cS$.
By~\eqref{LP:completion+}, the total execution time of all pieces of a job $j$ on all shared processors equals $\p{j}-t_j$, for each job $j\in\{1,\ldots,n\}$.
Since $\complTime{\cS}{j}{\Mpriv}=t_j$ in $\cS$, we obtain that the total length of all pieces of $j$ in $\cS$ equals $\complTime{\cS}{j}{\Mpriv}+\p{j}-t_j=\p{j}$ as required by (\ref{L1}).
This proves that $\cS$ is feasible and~(\ref{LP:t+}) implies that it is $A$-compatible. Finally $f$ in (\ref{LP:objective+}) equals the total weighted overlap in (\ref{L4}) since it can be readily verified that $\sum_{k=1}^jx_{jik}=\overlap{\cS}{j,\Mshared_i}$ for each $i\in\{1,\ldots,m\}$.

Now for an $A$-compatible feasible schedule $\cS$, set $t_j=\complTime{\cS}{j}{\Mpriv}$ for each $j\in\jobs$ and set $x_{j i k}$ to be the total execution time of job $j\in\jobs$ on shared processor $\Mshared_i$, $i\in\{1,\ldots,m\}$, in time interval $(t_{k-1},t_k)$, $k\in\{1,\ldots,n\}$, where $t_0=0$.
Since $\cS$ is $A$-compatible,~\eqref{LP:t+} is satisfied.
The constraint~\eqref{LP:exclusion+} is satisfied since  (\ref{L2}) holds in a feasible $\cS$. 
The constraint~\eqref{LP:completion+} is satisfied since by (\ref{L1}) the total execution time of each job in a feasible $\cS$ equals its processing time.
Finally,~\eqref{LP:non-negative+} follows directly from the definition of $x_{j i k}$'s. Thus the solution is a feasible solution to LP, and  the total weighted overlap of $\cS$ equals $f$ in~\eqref{LP:objective+}.
\end{proof}

\begin{theorem} \label{thm:A-compatible}
Given a permutation $A$ of jobs, a feasible and $A$-compatible schedule that maximizes the total weighted overlap can be computed in polynomial time.
\end{theorem}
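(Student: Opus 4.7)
The plan is to invoke Lemma~\ref{lem:LP} directly: it already establishes a value-preserving correspondence between feasible solutions to the LP~\eqref{LP:objective+}--\eqref{LP:non-negative+} and feasible $A$-compatible schedules. Hence maximizing $f$ over the feasible region of the LP produces the total weighted overlap of an optimal $A$-compatible schedule, and the corresponding schedule is itself optimal. So the argument reduces to two polynomial-time claims: (i) an optimal LP solution can be computed in polynomial time, and (ii) the corresponding schedule $\cS$ can be constructed from such a solution in polynomial time.

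For claim~(i), I would observe that the LP has $O(n^2 m)$ variables ($x_{jik}$ for $j\in\{1,\ldots,n\}$, $k\in\{1,\ldots,j\}$, $i\in\{1,\ldots,m\}$, together with the $n$ variables $t_k$) and $O(nm)$ constraints. Since all coefficients are rational of bit length bounded by the input, any polynomial-time linear programming algorithm (for example, Khachiyan's ellipsoid method, or interior point methods) computes an optimal vertex solution in time polynomial in the input size.

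For claim~(ii), I would follow the recipe that immediately precedes Lemma~\ref{lem:LP}: set $\complTime{\cS}{j}{\Mpriv}=t_j$, and for each shared processor $\Mshared_i$ and each interval $(t_{k-1},t_k)$ place the pieces of length $x_{jik}$, $j\in\{k,\ldots,n\}$, back-to-back and disjointly inside $(t_{k-1},t_k)$; constraint~\eqref{LP:exclusion+} guarantees that the total length of these pieces does not exceed $t_k-t_{k-1}$, so the packing is feasible and involves only $O(n^2 m)$ elementary placements. Lemma~\ref{lem:LP} then certifies that the resulting $\cS$ is feasible, $A$-compatible, and has $\tct{\cS}=f$ equal to the LP optimum; combined with the converse direction of that lemma, no $A$-compatible schedule can achieve a larger total weighted overlap.

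There is no genuine obstacle here; the only mild subtlety is to note that the LP is genuinely feasible and bounded (feasibility: the all-zero $x$-vector together with $t_j=p_j$ satisfies all constraints; boundedness: $f\le \sum_j w_j p_j$), so an optimal vertex solution exists and is returned by the chosen LP algorithm. Combining the three observations yields the theorem.
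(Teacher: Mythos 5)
Your argument is exactly the paper's proof (the paper simply cites Lemma~\ref{lem:LP} and the polynomial solvability of LPs), with the extra bookkeeping about variable counts and schedule construction spelled out. One small slip in your aside: the feasibility witness ``$x=0$, $t_j=p_j$'' satisfies~\eqref{LP:t+} only when the permutation $A$ is consistent with non-decreasing processing times; for a general $A$ the LP (like the set of $A$-compatible schedules itself) can be empty, e.g.\ one shared processor with a very long job forced to finish before a very short one, so the correct reading is that the LP algorithm either returns an optimal $A$-compatible schedule or certifies that none exists.
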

\begin{proof}
By Lemma \ref{lem:LP}, the schedule $\cS$ corresponding to an optimal solution to LP is feasible and $A$-compatible, and it maximizes the total weighted overlap.
The optimal solution to LP can be found in polynomial time, see for instance~\cite{Sch}.
\end{proof}

\section{Approximation Algorithm} \label{sec:approximation}

In this section we show a strongly polynomial $\alpha$-approximation algorithm for the problem. The idea is to find a natural class of schedules for each instance of the problem such that optimal schedules in the class can be found in strongly polynomial time and such that those optimal schedules guarantee the required approximation $\alpha$. 

To that end we limit ourselves to \emph{$\alpha$-private} schedules in this section. In an $\alpha$-private schedule each job $j$ executes for at least $\alpha\p{j}$ time on its private processor $\Mpriv_j$, and in time interval $(0,\alpha\p{j})$ only on shared processors.
Although the optimal order of completion times of jobs on their \emph{private} processors in $\alpha$-private schedules is still difficult to find, and thus the LP from Section \ref{sec:LP} may not be used to find an optimal $\alpha$-private, we use another key property of those schedules which is that each job $j$ completes by $\alpha\p{j}$ on all \emph{shared} processors in formulating another linear program, we call it LA, to find optimal $\alpha$-private schedules in this section.  Therefore, each job $j$ must complete by $\alpha p_j$, $j\in\{1,\ldots,n\}$, on all shared processors in an $\alpha$-private schedule, and thus the order of these \emph{limiting} time points is clearly the same as the order of job processing times $\p{1}\leq\cdots\leq\p{n}$.
The completion time of $j$  on its private processor $\Mpriv_j$ will be set to $\complTime{\cS}{j}{\Mpriv}=(1-\alpha)\p{j}+\tilde{t}_j$ in $\alpha$-private schedules, where $\tilde{t}_j\geq 0$ is referred to as the \emph{remainder} of $j$. The choice of $\alpha$ needs to guarantee that $\alpha p_j\leq (1-\alpha)\p{j}+\tilde{t}_j$ so that each piece of job $j$ on shared processors counts for an overlap in an $\alpha$-private schedule. This inequality imposes an upper bound on $\alpha$. On the other hand we wish $\alpha$ to be as large as possible, in particular greater than a half, to guarantee as good as possible an approximation offered by $\alpha$-private schedules. This imposes a lower bound on $\alpha$. The compromise used in our LA is $\alpha=\frac{2m+3}{4(m+1)}=\frac{1}{2}+\frac{1}{4(m+1)}$. It remains open whether a higher value of $\alpha$ that meets both conditions can be found. We are now ready to show that the LA with this alpha finds an optimal $\alpha$-private schedule, i.e. an $\alpha$-private schedule that maximizes the total weighted overlap among all $\alpha$-private schedules.
The variables used in the LA are as follows.
For each $j\in\{1,\ldots,n\}$, a variable $\tilde{t}_j$ is the reminder of job $j$ to be executed on its private processor $\Mpriv_j$.
For each $j\in\{1,\ldots,n\}$, $k\in\{1,\ldots,j\}$ and $i\in\{1,\ldots,m\}$, a variable $x_{jik}$ is the total amount of job $j$ to be executed on the shared processor $\Mshared_i$ in the time interval $(\alpha \p{k-1},\alpha \p{k})$. 
We take $\p{0}=0$, and assume the order $(1,\ldots,n)$, $\p{1}\leq\cdots\leq\p{n}$ in the program.

\medskip
The LA is as follows:
\begin{equation} \label{LP:objective}
\textup{maximize} \quad \sum_{j=1}^{n}\sum_{i=1}^{m}\sum_{k=1}^{j}(\w{j} -\cm{i}) x_{j i k}
\end{equation}
subject to:
\begin{equation} \label{LP:t}
\frac{p_{j}}{2(m+1)}\leq \tilde{t}_j \leq \alpha\p{j}, \quad j\in\{1,\ldots,n\},
\end{equation}
\begin{equation} \label{LP:exclusion}
\sum_{j=k}^{n}x_{j i k} \leq \alpha(\p{k} - \p{k-1}), \quad i\in\{1,\ldots,m\}, k\in\{1,\ldots,n\},
\end{equation}
\begin{equation} \label{LP:completion}
\sum_{i=1}^{m} \sum_{k=1}^{j} x_{j i k} = \alpha\p{j} - \tilde{t}_{j}, \quad j\in\{1,\ldots,n\},
\end{equation}
\begin{equation} \label{LP:non-negative}
x_{j i k}\geq 0, \quad j\in\{1,\ldots,n\}, k\in\{1,\ldots,j\}, i\in\{1,\ldots,m\}.
\end{equation}

For a feasible solution to the  LA, we define the following \emph{corresponding} schedule $\cS$.
For each job $j\in\{1,\ldots,n\}$, set the completion time of $j$ on its private processor $\Mpriv_j$ to
\begin{equation} \label{eq:corresp}
\complTime{\cS}{j}{\Mpriv}=(1-\alpha) \p{j} + \tilde{t}_j.
\end{equation}
For each $k\in\{1,\ldots,n\}$ and $i\in\{1,\ldots,m\}$, execute a piece of job  $j$ of duration $x_{j i k}$ on shared processor $\Mshared_i$ in time interval $(\alpha\p{k-1},\alpha\p{k})$ in such a way that no two job pieces overlap. 
We now prove that $\cS$ is feasible.
\begin{lemma} \label{lem:corresp}
For a feasible solution to LA,
the corresponding schedule $\cS$ is feasible, and the value of objective function of the solution 
equals the total weighted overlap of $\cS$.
\end{lemma}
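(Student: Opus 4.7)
The plan is to verify one by one the feasibility conditions (\ref{L1})--(\ref{L3}) for the schedule $\cS$ corresponding to a feasible LA solution, and then show that the LA objective coincides with the total weighted overlap (\ref{L4}). Throughout I use the ordering $\p{1}\leq\cdots\leq\p{n}$ that is fixed for the LA, and I place pieces inside each slot $(\alpha\p{k-1},\alpha\p{k})$ on each shared processor $\Mshared_i$ disjointly, as prescribed by the definition of the corresponding schedule.

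For the non-overlap condition (\ref{L2}): on each shared processor $\Mshared_i$ and in each slot $(\alpha\p{k-1},\alpha\p{k})$, constraint (\ref{LP:exclusion}) gives $\sum_{j=k}^{n}x_{jik}\leq \alpha(\p{k}-\p{k-1})$, which is exactly the length of the slot. Thus the pieces can be placed disjointly inside the slot, and since different slots on the same processor are themselves disjoint, (\ref{L2}) follows. For (\ref{L1}): the private processor contributes $\complTime{\cS}{j}{\Mpriv}=(1-\alpha)\p{j}+\tilde{t}_j$ by definition (\ref{eq:corresp}), while the shared pieces of $j$ sum to $\sum_{i,k}x_{jik}=\alpha\p{j}-\tilde{t}_j$ by (\ref{LP:completion}); adding these yields $\p{j}$.

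The main step is condition (\ref{L3}): every shared piece of $j$ must lie inside $(0,\complTime{\cS}{j}{\Mpriv})$. By construction all shared pieces of $j$ live in slots with $k\leq j$, hence inside $(0,\alpha\p{j})$. So it suffices to check $\alpha\p{j}\leq (1-\alpha)\p{j}+\tilde{t}_j$, equivalently $\tilde{t}_j\geq (2\alpha-1)\p{j}$. With $\alpha=\tfrac{1}{2}+\tfrac{1}{4(m+1)}$ one has $2\alpha-1=\tfrac{1}{2(m+1)}$, so the required inequality is precisely the lower bound $\tilde{t}_j\geq \p{j}/(2(m+1))$ imposed by (\ref{LP:t}). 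This is the one place where the particular value of $\alpha$, together with the lower bound on $\tilde{t}_j$, is used, and it is really the whole point of building the LA this way.

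Once (\ref{L3}) is established, the private processor $\Mpriv_j$ is busy with $j$ throughout $(0,\complTime{\cS}{j}{\Mpriv})$, so each shared piece of $j$ is fully overlapped by the private execution: $\overlap{\cS}{j,\Mshared_i}=\sum_{k=1}^{j}x_{jik}$. Substituting into (\ref{L4}) gives $\tct{\cS}=\sum_{j,i,k}(\w{j}-\cm{i})x_{jik}$, which is precisely the LA objective (\ref{LP:objective}). The only genuine obstacle in the proof is the compatibility check of the previous paragraph; the remaining items reduce to the LA constraints by direct substitution into the definition of the corresponding schedule.
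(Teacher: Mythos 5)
Your proof is correct and follows essentially the same route as the paper's: verify feasibility constraint by constraint from the LA constraints, with the key step being that shared pieces of $j$ end by $\alpha\p{j}\leq(1-\alpha)\p{j}+\tilde{t}_j$, which reduces via $2\alpha-1=\tfrac{1}{2(m+1)}$ to exactly the lower bound on $\tilde{t}_j$ in~\eqref{LP:t}. Your write-up is somewhat more explicit than the paper's in naming which of~\eqref{L1}--\eqref{L3} each step establishes, but the substance is identical.
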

\begin{proof}
Let $\cS$ be a  schedule corresponding to a solution $x_{j i k}$, $j\in\{1,\ldots,n\}$, $k\in\{1,\ldots,j\}$, $i\in\{1,\ldots,m\}$ and $\tilde{t}_{j}$, $j\in\{1,\ldots,n\}$.
For each $k\in\{1,\ldots,j\}$ and $i\in\{1,\ldots,m\}$, executing a piece of job $j$ of duration $x_{j i k}$ on shared processor $\Mshared_i$ in time interval $(\alpha\p{k-1},\alpha \p{k})$  is feasible since~\eqref{LP:exclusion} guarantees that the duration $x_{j i k}$ does not exceed the length of the interval, and by~\eqref{LP:non-negative} the duration $x_{j i k}$ is non-negative.
Moreover, again by~\eqref{LP:exclusion} and~\eqref{LP:non-negative}, the length of the interval $(\alpha \p{k-1},\alpha \p{k})$ is sufficient to execute all pieces of jobs of length $x_{j i k}$, $j\in\{1,\ldots,n\}$, on each shared processor $\Mshared_i$.
By~\eqref{LP:completion}, the total execution time of all pieces of $j$ on all shared processors equals $\alpha\p{j}-\tilde{t}_j$, for each job $j\in\{1,\ldots,n\}$.
Since $\complTime{\cS}{j}{\Mpriv}=(1-\alpha)\p{j}+\tilde{t}_j$, we obtain that the total length of all pieces of $j$ in $\cS$ equals $\p{j}$ as required.
Moreover,  by~\eqref{LP:exclusion} and~\eqref{LP:completion} all the pieces of job $j$ that execute on shared processors end by $\alpha \p{j}$, and thus they end by 
$\complTime{\cS}{j}{\Mpriv}=(1-\alpha)\p{j}+\tilde{t}_j$ since by~\eqref{LP:t}, $\tilde{t}_j\geq p_j/(2(m+1))$, and $\alpha=\frac{1}{2}+\frac{1}{4(m+1)}$. This proves that $\cS$ is feasible.
Finally, by~\eqref{eq:corresp}, each job piece of $j$ on shared processor ends by its completion on the private one which shows that (\ref{LP:objective}) is the total weighted overlap of $\cS$.
\end{proof}

We now show that an optimal solution to the LA gives an $\alpha$-approximation of the optimum.

\begin{lemma} \label{lem:1/2}
For each input instance, the schedule $\cS$ that corresponds to an optimal solution to LA satisfies $\tct{\cS}\geq \alpha \tct{\Sopt}$, where $\Sopt$ is an optimal solution.
\end{lemma}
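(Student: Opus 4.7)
The strategy is to exhibit a specific feasible solution to LA whose objective value equals $\alpha\tct{\Sopt}$; combined with the optimality of the LA solution and Lemma~\ref{lem:corresp}, this yields $\tct{\cS}\geq\alpha\tct{\Sopt}$. By Theorem~\ref{thm:synchronized} we may assume $\Sopt$ is synchronized; after relabeling jobs so that $p_1\leq p_2\leq\cdots\leq p_n$, denote by $j_1,\ldots,j_k\in\{1,\ldots,n\}$ the indices of the jobs in the synchronized sequence of $\Sopt$, with breakpoints $0=t_0\leq t_1\leq\cdots\leq t_k$ and widths $m\geq m_1\geq\cdots\geq m_k\geq 1$. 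In particular $p_{j_i}=t_i+m_i(t_i-t_{i-1})$, so $p_{j_i}\geq t_i$ and $t_i\geq p_{j_i}/(m_i+1)\geq p_{j_i}/(m+1)$.

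The construction I would use is to \emph{scale every overlap of $\Sopt$ by the factor~$\alpha$}. Set $\tilde{t}_{j_i}=\alpha t_i$ for $i\in\{1,\ldots,k\}$ and $\tilde{t}_j=\alpha p_j$ for every $j\notin\{j_1,\ldots,j_k\}$. For each $i$ and each shared processor $l\leq m_i$, define
\[x_{j_i l q}=\bigl|(\alpha t_{i-1},\alpha t_i)\cap(\alpha p_{q-1},\alpha p_q)\bigr|,\]
and set all other $x_{jlq}$ to~$0$. Intuitively, job $j_i$ is allotted $\alpha(t_i-t_{i-1})$ units of processing inside the scaled window $(\alpha t_{i-1},\alpha t_i)$ on each of its $m_i$ cheapest shared processors, and the distribution among LA intervals $(\alpha p_{q-1},\alpha p_q)$ is read off from this placement.

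The feasibility verification splits into four checks. Nonnegativity~\eqref{LP:non-negative} is immediate. Constraint~\eqref{LP:completion} holds because $\sum_{l,q}x_{j_i l q}=m_i\alpha(t_i-t_{i-1})=\alpha(p_{j_i}-t_i)=\alpha p_{j_i}-\tilde{t}_{j_i}$ (and trivially for $j$ outside the sequence). Constraint~\eqref{LP:t} holds since $\alpha t_i\leq\alpha p_{j_i}$ and $\alpha t_i\geq\alpha p_{j_i}/(m+1)\geq p_{j_i}/(2(m+1))$, the last inequality being equivalent to $\alpha\geq 1/2$. The delicate check is~\eqref{LP:exclusion}: for any fixed $l$ and $q$, the scaled windows $(\alpha t_{i-1},\alpha t_i)$ are pairwise disjoint as inherited from the synchronization of $\Sopt$, so $\sum_i x_{j_i l q}$ does not exceed the length $\alpha(p_q-p_{q-1})$ of the LA interval; moreover $x_{j_i l q}>0$ forces $t_i>p_{q-1}$, whence $p_{j_i}\geq t_i>p_{q-1}$ and therefore $j_i\geq q$ in the sort order, confirming that only indices $j\geq q$ contribute to the sum.

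Finally, the objective value of the constructed solution is
\[\sum_{i=1}^{k}\sum_{l=1}^{m_i}(w_{j_i}-c_l)\sum_{q=1}^{j_i}x_{j_i l q}=\sum_{i=1}^{k}\alpha(t_i-t_{i-1})\sum_{l=1}^{m_i}(w_{j_i}-c_l)=\alpha\tct{\Sopt},\]
so the optimum of LA is at least $\alpha\tct{\Sopt}$ and Lemma~\ref{lem:corresp} closes the argument. I expect the principal obstacle to lie in the twin subconditions of~\eqref{LP:exclusion}: that the scaled windows do not over-saturate any LA interval on any shared processor, and that every job feeding interval $q$ indeed has sort-index at least~$q$. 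Both reductions rely on the identity $p_{j_i}\geq t_i$ that is available only in a synchronized schedule, which is why invoking Theorem~\ref{thm:synchronized} at the outset is essential.
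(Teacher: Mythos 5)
Your proposal is correct and follows essentially the same route as the paper: both proofs take a synchronized optimal schedule (via Theorem~\ref{thm:synchronized}), scale its shared-processor execution amounts by $\alpha$ to build a feasible LA solution with objective value $\alpha\tct{\Sopt}$, and conclude via Lemma~\ref{lem:corresp}. Your explicit window-intersection formula $x_{j_i l q}=\bigl|(\alpha t_{i-1},\alpha t_i)\cap(\alpha p_{q-1},\alpha p_q)\bigr|$ is just the paper's $\alpha y_{jik}$ written out using the synchronized structure, and your feasibility checks, including the bound $t_i\geq p_{j_i}/(m+1)$ replacing the paper's $e_j\leq mp_j/(m+1)$, match the paper's.
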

\begin{proof}
Suppose that an input instance consists of $n$ jobs $\{1,\ldots,n\}$ with processing times $\p{1}\leq\cdots\leq\p{n}$, weights $\w{1},\ldots,\w{n}$ and $m$ shared processors with costs $\cm{1}\leq\cdots\leq\cm{m}$. Let $\Sopt$ be an optimal schedule for the instance. By Theorem \ref{thm:synchronized} we may assume synchronized $\Sopt$.

For  $\Sopt$, let $y_{j i k}$ be equal to the total execution time of job $j$ on shared processor $i$ in time interval $(\p{k-1},\p{k})$ for each $j\in\{1,\ldots,n\}$, $i\in\{1,\ldots,m\}$ and $k\in\{1,\ldots,j\}$, where $\p{0}=0$.
Denote for brevity
\[e_j=\sum_{i=1}^{m}\sum_{k=1}^{j} y_{j i k}\]
to be the total amount of  job $j$ executed on all shared processors in $\Sopt$. From the synchronization we observe 
\begin{equation} \label{b}
0\leq e_j\leq \frac{mp_j}{(m+1)}.
\end{equation}
We assign values to the variables in the LA as follows:
\begin{align}
x_{j i k}   & = \alpha y_{j i k}, \quad j\in\{1,\ldots,n\}, k\in\{1,\ldots,j\},i\in\{1,\ldots,m\}, \label{eq:Sopt1} \\
\tilde{t}_j & = \alpha(\p{j} - e_j), \quad j\in\{1,\ldots,n\}. \label{eq:Sopt2}
\end{align}
We prove that this assignment gives a feasible solution to the LA.
By (\ref{b}) and $\alpha>\frac{1}{2}$, we have~\eqref{LP:t} satisfied.
For each shared processor $\Mshared_i$ and each interval $(\p{k-1},\p{k})$, the total execution time of all job pieces executed in this interval on $\Mshared_i$ in the schedule $\Sopt$ is
$\sum_{j=1}^ny_{j i k}$.
Thus, since $\Sopt$ is feasible (and in particular we use the fact that no job $j$ executes on a shared processor after time point $\p{j}$ in $\Sopt$), we have for each $i\in\{1,\ldots,m\}$
\[\sum_{j=k}^ny_{j i k}\leq \p{k}-\p{k-1}.\]
This proves, by~\eqref{eq:Sopt1}, that~\eqref{LP:exclusion} holds.
The total amount of a job $j$ that executes on all shared processors in $\Sopt$ is $e_j$ and hence 
\[e_j = \sum_{i=1}^m\sum_{k=1}^j y_{j i k} = \frac{1}{\alpha}\sum_{i=1}^m\sum_{k=1}^j x_{j i k},\]
which by~\eqref{eq:Sopt2} gives~\eqref{LP:completion}.
Finally,~\eqref{LP:non-negative} follows directly from~\eqref{eq:Sopt1} and the feasibility of $\Sopt$.

Let $\cS$ be the schedule corresponding to the above LA solution.
By Lemma~\ref{lem:corresp}, $\cS$ is indeed a feasible schedule, and
the total weighted overlap of $\cS$ equals by~\eqref{eq:Sopt1}:
\[\tct{\cS}=\sum_{j=1}^n\sum_{i=1}^m\sum_{k=1}^j x_{j i k}(\w{j}-\cm{i}) = \sum_{j=1}^n\sum_{i=1}^m\sum_{k=1}^j \alpha y_{j i k}(\w{j}-\cm{i}) = \alpha \tct{\Sopt},\]
which completes the proof since $\tct{\cS^*}\geq \tct{\cS}$, where $\cS^*$ is the schedule that corresponds to an optimal solution to the LA.
\end{proof}

We now argue that the LA can be recast as a minimum-cost network flow problem which can be solved more efficiently than a general linear program, see Orlin \cite{O93}.
The directed flow network $D=(V,A)$ can be constructed as follows.
For each $i\in\{1,\ldots,m\}$ and $k\in\{1,\ldots,n\}$ introduce two nodes $v_{ik}$, $v_{ik}'$.
For each job $j\in\{1,\ldots,n\}$, introduce a node $u_j$, and let $s$ and $t$ be the source and sink nodes in the network.
We add the following arcs to $D$:
\begin{enumerate}[label={\normalfont{(\alph*)}}]
 \item for each $j\in\{1,\ldots,n\}$, let $(s,u_j)\in A$ be an arc of capacity $\arcCap{(s,u_j)}=\frac{m\p{j}}{2(m+1)}$ and cost $\arcCost{(s,u_j)}=0$,
 \item for each $i\in\{1,\ldots,m\}$, $j\in\{1,\ldots,n\}$ and $k\in\{1,\ldots,j\}$,  let $(u_j,v_{ik})\in A$ be an arc of capacity $\arcCap{(u_j,v_{ik})}=+\infty$ and cost $\arcCost{(u_j,v_{ik})}=\w{j}-\cm{i}$,
 \item for each $i\in\{1,\ldots,m\}$ and $k\in\{1,\ldots,n\}$, let $(v_{ik},v_{ik}')\in A$ be an arc of capacity $\arcCap{(v_{ik},v_{ik}')}=\alpha(\p{k}-\p{k-1})$ and cost $\arcCost{(v_{ik},v_{ik}')}=0$,
 \item for each $i\in\{1,\ldots,m\}$ and $k\in\{1,\ldots,n\}$, let $(v_{ik}',t)\in A$ be an arc of capacity $\arcCap{(v_{ik}',t)}=+\infty$ and cost $\arcCost{(v_{ik}',t)}=0$.
\end{enumerate}
Suppose that $f$ is an $s$-$t$ flow in $D$.
We define a feasible LA solution by taking $x_{jik}=f(u_j,v_{ik})$ for each $j\in\{1,\ldots,n\}$, $i\in\{1,\ldots,m\}$, $k\in\{1,\ldots,j\}$ and $\tilde{t}_j=\alpha p_j-f(s,u_j)$ for each $j\in\{1,\ldots,n\}$.
Constraint~\eqref{LP:t} is satisfied because $\arcCap{(s,u_j)}=\frac{m\p{j}}{2(m+1)}$.
Constraint~\eqref{LP:exclusion} follows directly from $\arcCap{(v_{ik},v_{ik}')}=\alpha(\p{k}-\p{k-1})$ since the flow through the arc $(v_{ik},v_{ik}')$ equals $\sum_{j=k}^n f(u_j,v_{ik})=\sum_{j=k}^n x_{jik}$.
For~\eqref{LP:completion} we have for each $j\in\{1,\ldots,n\}$,
\[\sum_{i=1}^m\sum_{k=1}^j x_{jik} = \sum_{i=1}^m\sum_{k=1}^j f(u_j,v_{ik}) = f(s,u_j) = \alpha \p{j} - \tilde{t}_j. \]
Then,~\eqref{LP:non-negative} is due to the fact that the flow is non-negative.
Since all arcs except for $(u_j,v_{ik})$'s have cost zero, we obtain that the objective function in~\eqref{LP:objective} equals the total cost of the flow $f$.
The construction leading from an LA solution to a flow in $D$ is straightforward and analogous and hence we skip its description.

We have proved the following.
\begin{theorem} \label{thm:approximation}
For any input instance with $m\geq 1$ processors, there exists a strongly polynomial approximation algorithm with approximation ratio $\alpha=\frac{1}{2}+\frac{1}{4(m+1)}$.
\qed
\end{theorem}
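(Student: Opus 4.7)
The plan is to assemble Theorem~\ref{thm:approximation} almost entirely from the infrastructure already developed in this section, so the proposal is largely a matter of stitching the pieces together and verifying the complexity bound. The algorithm I would present is: (i) build the flow network $D$ exactly as described after Lemma~\ref{lem:1/2}; (ii) compute a minimum-cost maximum $s$-$t$ flow $f$ in $D$ using Orlin's strongly polynomial algorithm~\cite{O93}; (iii) read off an LA solution by setting $x_{jik}=f(u_j,v_{ik})$ and $\tilde{t}_j=\alpha p_j-f(s,u_j)$; (iv) output the schedule $\cS$ corresponding to this LA solution as defined by~\eqref{eq:corresp} together with the interval assignment of pieces of length $x_{jik}$ inside $(\alpha p_{k-1},\alpha p_k)$ on $\Mshared_i$.

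For correctness of the approximation ratio, I would first invoke Lemma~\ref{lem:corresp} to conclude that the schedule $\cS$ produced in step (iv) is feasible and that its total weighted overlap equals the LA objective value~\eqref{LP:objective} evaluated on the solution produced in step (iii). Next I would check that the flow/LA correspondence spelled out in the paragraph preceding the theorem does yield an \emph{optimal} LA solution: minimizing the negative of the objective over feasible flows in $D$ corresponds exactly to maximizing~\eqref{LP:objective} subject to~\eqref{LP:t}--\eqref{LP:non-negative}, because every nonzero arc cost sits on the arcs $(u_j,v_{ik})$ whose cost $w_j-c_i$ matches the LA coefficients, while the capacities on $(s,u_j)$ and $(v_{ik},v_{ik}')$ encode~\eqref{LP:t} and~\eqref{LP:exclusion} respectively, and flow conservation at $u_j$ encodes~\eqref{LP:completion}. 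Applying Lemma~\ref{lem:1/2} to this optimal LA solution then yields $\tct{\cS}\geq \alpha\,\tct{\Sopt}$.

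For the running time, I would observe that $D$ has $O(mn)$ nodes and $O(mn^2)$ arcs and can be constructed in strongly polynomial time, and that Orlin's algorithm~\cite{O93} solves minimum-cost flow in strongly polynomial time in the size of the network. Constructing $\cS$ from the flow is straightforward scheduling of disjoint pieces of total length at most $\alpha(p_k-p_{k-1})$ on each shared processor inside $(\alpha p_{k-1},\alpha p_k)$, which is polynomial. Combining the above steps gives a strongly polynomial algorithm whose output $\cS$ satisfies $\tct{\cS}\geq\alpha\,\tct{\Sopt}$, which is exactly the claim of Theorem~\ref{thm:approximation}.

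The main (and essentially only) subtle point I would expect to verify carefully is the bijective nature of the LA-to-flow correspondence, i.e.\ that an optimal min-cost flow really does correspond to an optimal LA solution rather than merely a feasible one; this amounts to checking that the capacities, costs, and conservation constraints of $D$ capture~\eqref{LP:t}--\eqref{LP:non-negative} without any slack that would let the LA optimum exceed what the flow can realize. Everything else is routine bookkeeping on top of Lemmas~\ref{lem:corresp} and~\ref{lem:1/2}.
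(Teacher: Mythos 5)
Your proposal follows essentially the same route as the paper: Lemma~\ref{lem:corresp} for feasibility and objective equality, Lemma~\ref{lem:1/2} for the ratio $\alpha$, and the recasting of LA as a network-flow problem solved by Orlin's algorithm for strong polynomiality. The only minor caveat is that the flow to compute is the one \emph{maximizing} total arc cost (equivalently, a minimum-cost flow after negating the costs $\w{j}-\cm{i}$, with no requirement that the flow value be maximum, since forcing maximum flow could route flow through arcs with $\w{j}<\cm{i}$ and the zero flow must remain admissible); with that understood, your argument matches the paper's.
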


Observe that by  Lemma \ref{lem:1/2} the LA is $\frac{5}{8}$- approximation for a single shared processor problem, $m=1$, which improves the  $\frac{1}{2}$- approximation in  \cite{DK17} but at a cost of computational complexity which however still remains strongly polynomial.

\section{Processor-descending and sequential schedules} \label{sec:segments}

As a stepping stone towards the proof  of Theorem \ref{thm:synchronized} and towards the algorithm for antithetical instances we show that there always exist optimal schedules that are \emph{processor-descending} and \emph{sequential} in this section. 

For a given schedule $\cS$, we say that an interval $I$ is a \emph{segment} in $\cS$ if $I$ is a maximal interval such that each shared processor is either idle in $I$ or has no idle time in $I$.
We say that a job is \emph{present} in a segment if some non-empty part of the job executes in this segment.
We then say that a segment $I$ is \emph{sequential} if each job $j$ is either not present in $I$ or there exists an interval $I'\subseteq I$, called the \emph{interval of $j$ in $I$}, such that each processor that is not idle in $I$ executes $j$ exactly in the interval $I'$.
If each segment in a schedule is sequential, then the schedule is called \emph{sequential}.
We say that a schedule is \emph{processor-descending} if each shared processor has no idle times between any two job pieces it executes, and for any two processors $\Mshared_i$ and $\Mshared_{i'}$ with $\cm{i}<\cm{i'}$ it holds that $\Mshared_i$ completes executing all job pieces not later than $\Mshared_{i'}$. A job $j$ is \emph{synchronized} in a processor-descending and sequential schedule $\cS$ if the last piece of $j$ ends on shared processors at $\complTime{\cS}{j}{\Mpriv}$.

We now describe a simple schedule modification that, without increasing the total weighted overlap, arrives at a schedule that is processor-descending and sequential (see Figure~\ref{fig:segment} for an illustration).
We name this transformation as a procedure as it will be used later.

\procStart
\textbf{Procedure} $\procSequential(\cS)$\\
\textbf{Input:} A feasible schedule $\cS$.\\
\textbf{Output:} A processor-descending and sequential schedule $\cS'$ with $\tct{\cS'}\geq\tct{\cS}$.
\begin{enumerate} [label={\normalfont{(M\arabic*)}},nosep]
 \item\label{it:s1} As long as there exists a processor $\Mshared_i$ and an idle time $(a,b)$ (take this idle time to have maximum duration) followed by a piece $(a',b')$ of a job $j$ do the following: move the piece of $j$ to be executed in time interval $(a,a+b'-a')$.
 \item\label{it:s2} For each segment $I$ in $\cS$ do the following:
  \begin{enumerate} [label={\normalfont{(M\arabic{enumi}\alph*)}},nosep]
   \item\label{it:s2a} Let $j_1^I,\ldots,j_{l(I)}^I$ be the jobs present in segment $I$ sorted according to non-decreasing order of their completion times on private processors, $\complTime{\cS}{j_1^I}{\Mpriv}\leq\cdots\leq\complTime{\cS}{j_{l(I)}^I}{\Mpriv}$. Let $a_i^I$ be the total amount of job $j_i^I\in\{j_1^I,\ldots,j_{l(I)}^I\}$ executed in the segment $I$. Let $m'$ be the number of processors used by $I$ in $\cS$.
   \item\label{it:s2b} Replace the segment $I$ in $\cS$ with one in which the job $j_{t}^I$, $t\in\{1,\ldots,l(I)\}$, executes in time interval
    \[\left( L + \frac{1}{m'}\sum_{t'=1}^{t-1}a_{t'}^I, L + \frac{1}{m'}\sum_{t'=1}^{t}a_{t'}^I \right)\]
    on all these $m'$ shared processors, where $L$ is the left endpoint of $I$.
  \end{enumerate}
\end{enumerate}
\procEnd

\begin{figure}[ht!]
\begin{center}
 \includegraphics[scale=1]{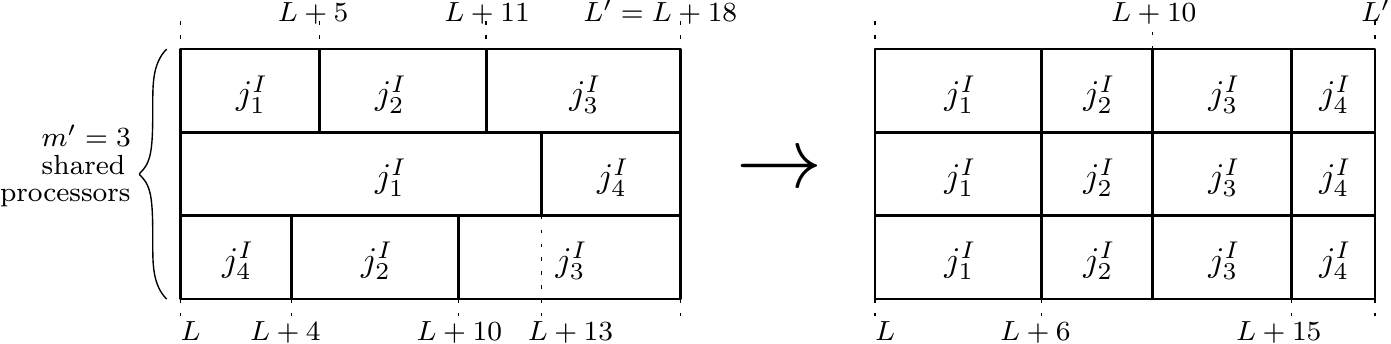}
\end{center}
\caption{Transformation performed in Step~\ref{it:s2} for a segment $I=(L,L')$ with total job executions times in this segment being $a_1^I=18$, $a_2^I=12$, $a_3^I=15$, $a_4^I=9$}
\label{fig:segment}
\end{figure}

\begin{lemma} \label{lem:segments}
Procedure~$\procSequential$ transforms an input schedule $\cS$ into a schedule that is processor-descending, sequential and has the same total weighted overlap.
\end{lemma}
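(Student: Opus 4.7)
My plan is to verify the three asserted properties of the output — sequential segments, the processor-descending condition, and preservation of the total weighted overlap — by analyzing steps~\ref{it:s1} and~\ref{it:s2} separately, since they address the ``no idle gap'' and the ``sequential segment'' halves respectively.

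For~\ref{it:s1}, each iteration picks a maximal idle interval $(a,b)$ on some $\Mshared_i$ immediately followed by a piece $(b,b')$ of job $j$ and shifts that piece to $(a, a+b'-b)$. I would check that the shift preserves feasibility: on $\Mshared_i$ the relocated piece occupies only previously-idle time and thus conflicts with no other piece, and by~\eqref{L3} the earlier copy is still inside $(0,\complTime{\cS}{j}{\Mpriv})$, so its full length continues to count as overlap of $j$ on $\Mshared_i$ and $\tct{\cdot}$ is unchanged. Termination is a standard monovariant argument: the sum of left endpoints of pieces on shared processors strictly decreases by $b-a>0$ per iteration and is bounded below by $0$. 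After the step, no shared processor has an idle interval between two of its pieces, and since the leftmost eligible gap is always eligible, every processor's first piece is shifted to start at time $0$.

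For~\ref{it:s2} I would analyze each segment $I=(L,L')$ independently. After~\ref{it:s1}, a fixed set of $m'$ shared processors is busy throughout $I$, executing a combined $m'(L'-L)$ units of work. Let $a_t^I$ be the total execution of $j_t^I$ in $I$, with the jobs sorted by non-decreasing $\complTime{\cS}{j_t^I}{\Mpriv}$ as in~\ref{it:s2a}. After the rearrangement, $j_t^I$ executes on all $m'$ busy processors over a sub-interval of length $a_t^I/m'$, finishing at $L+\tfrac{1}{m'}\sum_{t'\leq t}a_{t'}^I$. The crucial feasibility check is that this end-time is at most $\complTime{\cS}{j_t^I}{\Mpriv}$, so the new pieces still qualify as overlap; this follows from the ordering, because in the original $\cS$ every piece of $j_1^I,\ldots,j_t^I$ inside $I$ already ends by $\complTime{\cS}{j_t^I}{\Mpriv}$, so their combined length $\sum_{t'\leq t}a_{t'}^I$ fits inside the window $(L,\complTime{\cS}{j_t^I}{\Mpriv})$ across $m'$ processors of capacity $m'(\complTime{\cS}{j_t^I}{\Mpriv}-L)$. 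Preservation of the overlap inside $I$ is then a direct accounting: both the old and the new contributions equal $\sum_t w_{j_t^I}a_t^I-(L'-L)\sum_{l=1}^{m'} c_{i_l}$, because each of the $m'$ busy processors runs exactly $L'-L$ units in $I$ in both schedules.

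What I expect to be the main obstacle is cleanly obtaining the cost-to-completion alignment demanded by ``processor-descending'', since after~\ref{it:s1} and~\ref{it:s2} the busy-processor subsets of different segments can still be arbitrary $m'$-subsets of $\{\Mshared_1,\ldots,\Mshared_m\}$. I would exploit one further degree of freedom: in the now-sequential schedule all $m'$ busy processors in any given segment execute identical pieces, so swapping the role of a busy processor with that of an idle processor within that segment disturbs neither sequentiality nor the no-idle-gap property of either processor. Processing segments globally in order of decreasing width $m'$, I would greedily reassign each segment's busy slots to the cheapest shared processors compatible with the reassignments already made in the wider (and hence cost-nested) segments. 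This produces a nested family $\{\Mshared_1\}\subseteq\{\Mshared_1,\Mshared_2\}\subseteq\cdots$ of busy sets across segments, which is precisely the completion-order requirement of the processor-descending definition; per segment the overlap changes by $(L'-L)\sum_l(c_{i_l}-c_{\mathrm{new},l})\ge 0$, consistent with the procedure's $\tct{\cS'}\geq\tct{\cS}$ specification.
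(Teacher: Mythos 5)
Your proposal is correct and covers the same ground as the paper's proof, but it differs on the one step that carries the mathematical weight. For the feasibility of the rearranged segment (the paper's inequality $\complTime{\cS'}{j_t^I}{\Mpriv}\geq L+\frac{1}{m'}\sum_{t'\leq t}a_{t'}^I$), the paper argues by contradiction: if $j_t^I$ finished too late in the new segment, some earlier-indexed job $j_{\bar t}^I$ would have to run past its own private completion time in $\cS$, violating~\eqref{L3}. You instead give a direct capacity count: by~\eqref{L3} and the sorting in~\ref{it:s2a}, all pieces of $j_1^I,\ldots,j_t^I$ inside $I$ lie in $(L,\complTime{\cS}{j_t^I}{\Mpriv})$ on the $m'$ busy processors, so $\sum_{t'\leq t}a_{t'}^I\leq m'\bigl(\complTime{\cS}{j_t^I}{\Mpriv}-L\bigr)$, which is exactly the required bound. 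The two arguments are equivalent in content; yours is arguably cleaner and avoids the indirect step. Your accounting for why the segment's contribution to $\tct{\cdot}$ is unchanged (each busy processor runs $L'-L$ units before and after, each job keeps $a_t^I$) is also sound and makes explicit what the paper only asserts. On the processor-descending property you go beyond the paper: the paper disposes of the cost-ordering requirement with a one-line ``we may without loss of generality assume that these processors are $\Mshared_1,\ldots,\Mshared_{m'}$,'' whereas you supply an explicit reassignment of busy slots to the cheapest processors. That is a legitimate repair of a real gap, with one caveat you yourself flag: the reassignment can \emph{strictly} increase the total weighted overlap, which is consistent with the procedure's output specification $\tct{\cS'}\geq\tct{\cS}$ and with everything Corollary~\ref{cor:segments} needs, but is in tension with the lemma's literal claim that the overlap stays ``the same.'' That tension originates in the paper's own statement rather than in your argument, so I would not count it against you, but it is worth noting that the lemma as worded holds only for inputs whose segments already occupy cost-minimal processor sets.
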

\begin{proof}
Note that Step~\ref{it:s1} of Procedure~$\procSequential$ makes $\cS$ to be processor-descending because after this transformation, each shared processor is busy in a single time interval that starts at $0$.
Recall that, if in a given segment of the new schedule it holds that $m'$ shared processors are used, then since they are ordered according to their costs, we may without loss of generality assume that these processors are $\Mshared_1,\ldots,\Mshared_{m'}$.

Consider now Step~\ref{it:s2} of Procedure~$\procSequential$.
The fact that the total execution time of each job on shared processors does not change within the segment follows directly from the formula in Step~\ref{it:s2b}.
We need to prove that for each $t\in\{1,\ldots,l(I)\}$, the job $j_t^I$ completes its piece in segment $I$ in the output schedule $\cS'$ not later than its completion time on its private processor:
\begin{equation} \label{eq:filled}
\complTime{\cS'}{j_t^I}{\Mpriv} \geq L + \frac{1}{m'}\sum_{t'=1}^{t}a_{t'}^I=e_t
\end{equation}
since this implies that $\tct{\cS'}=\tct{\cS}$.
For each $t\in\{1,\ldots,l(I)\}$, if there exists a shared processor $\Mshared_i$ such that a piece of $j_t^I$ ends on $\Mshared_i$ in $\cS$  at $e_t$ or later, then we are done.
Suppose for a contradiction that~\eqref{eq:filled} does not hold, i.e., $\complTime{\cS'}{j_t^I}{\Mpriv}<e_t$ for some $t$. 
Since $\complTime{\cS}{j_t^I}{\Mpriv}=\complTime{\cS'}{j_t^I}{\Mpriv}$, we have that the job $j_t^I$ completes \emph{before} $e_t$ on each shared processor in $\cS$.
Thus, since there is no idle time in interval $(L,e_t)$ in $\cS'$, there exists $1\leq\bar{t}<t$ such that the job $j_{\bar{t}}^I$ completes on some shared processor at time $\bar{e}$, after the time point $e_t$ in $\cS$, $\bar{e}>e_t$.
But according to the job ordering picked in Step~\ref{it:s2a}, $\complTime{\cS}{j_{\bar{t}}^I}{\Mpriv}\leq\complTime{\cS}{j_t^I}{\Mpriv}$.
By assumption $\complTime{\cS}{j_t^I}{\Mpriv}=\complTime{\cS'}{j_t^I}{\Mpriv}<e_t$.
This gives that $\bar{e}>\complTime{\cS}{j_{\bar{t}}^I}{\Mpriv}$, which violates (\ref{L3}) and gives the required contradiction since $\cS$ is feasible.
This proves~\eqref{eq:filled} and completes the proof of the lemma.
\end{proof}

Thus we conclude.
\begin{corollary} \label{cor:segments}
There exists an optimal processor-descending and sequential schedule.
\qed
\end{corollary}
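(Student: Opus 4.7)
The plan is to deduce the corollary as an essentially immediate consequence of Lemma~\ref{lem:segments}. First I would argue that the supremum of $\tct{\cS}$ over feasible schedules is attained, i.e., that an optimal schedule $\Sopt$ actually exists. One clean way is to appeal to the LP formulation of Section~\ref{sec:LP}: given \emph{any} permutation $A$ of the jobs, Theorem~\ref{thm:A-compatible} produces an $A$-compatible schedule of maximum total weighted overlap among $A$-compatible schedules, and since every feasible schedule is $A$-compatible for some permutation $A$ of the $n!$ possible orderings, taking the best outcome over all permutations yields a genuine optimizer $\Sopt$.

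Next I would simply feed $\Sopt$ into Procedure~$\procSequential$ and let $\cS' = \procSequential(\Sopt)$. By Lemma~\ref{lem:segments}, $\cS'$ is processor-descending, sequential, and satisfies $\tct{\cS'} = \tct{\Sopt}$. Since $\Sopt$ is optimal, so is $\cS'$, which exhibits the desired optimal schedule with both structural properties and completes the proof.

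I do not expect any real obstacle here, since Lemma~\ref{lem:segments} already does all the work: Step~\ref{it:s1} of the procedure eliminates idle gaps on each shared processor (yielding processor-descendingness after reindexing by increasing cost), while Step~\ref{it:s2} rewrites every segment in the canonical sequential form without changing the total weighted overlap. The only subtle point, and the one worth stating explicitly in the write-up, is that existence of an optimum must be justified before invoking the procedure; once $\Sopt$ is in hand, the corollary is a one-line application of Lemma~\ref{lem:segments}.
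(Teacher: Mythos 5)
Your proof is correct and matches the paper's: the corollary is stated with an immediate \qed precisely because it follows by applying Procedure~$\procSequential$ to an optimal schedule and invoking Lemma~\ref{lem:segments}. Your extra care in justifying that an optimum exists (via Theorem~\ref{thm:A-compatible} and the finitely many permutations) is a valid point the paper leaves implicit, but it does not change the argument.
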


\section{Antithetical Instances} \label{sec:antithetical}

An instance $\jobs$ is \emph{antithetical} if for any two jobs $j$ and $j'$ it holds: $\p{j}\leq\p{j'}$ implies $\w{j}\geq\w{j'}$. Our main goal in this section is to show a polynomial time algorithm for antithetical instances. The algorithm relies on the LP given in Section \ref{sec:LP} which however requires an optimal job order to produce an optimal solution. We prove that the ascending order of
processing times is such an order, and that all jobs occur on shared processors in optimal schedules produced by the algorithm. The proof relies on a transformation, called $j$-filling, of a schedule
which we now define. The transformation may produce schedules which are not synchronized even if applied to a synchronized schedule, however those schedules must then be processor-descending and sequential with synchronized suffixes inherited from  the original synchronized schedule.

For a synchronized schedule $\cS$ we say that it is \emph{processing-time ordered} if $\p{j_1}\leq\cdots\leq\p{j_k}$, where $j_1,\ldots,j_k$ are the jobs that appear, in $\cS$ in this order, on the shared processors.
We then for brevity say that $(j_1,\ldots,j_k)$ is the \emph{ordering} of jobs in $\cS$.
Consider an arbitrary processor-descending and sequential schedule $\cS$.
We say that a suffix $(j_1,\ldots,j_k)$, is \emph{processing-time ordered} and \emph{synchronized} in $\cS$ if $\p{j_1}\leq\cdots\leq\p{j_k}$, there exists a time point $t$ such that exactly the jobs $j_1,\ldots,j_k$ execute in time interval $(t,+\infty)$ in this order, and the jobs $j_1,\ldots,j_k$ are synchronized.
The synchronization and processing-time ordering are thus not required for the entire schedule but only for some suffix of $\cS$.

Consider a processor-descending and sequential schedule $\cS$ such that there exists a job $j$ which satisfies one of the following.
\begin{enumerate}[label={\normalfont{(\arabic*)}}]
\item\label{it:antit1} $j$ is present on the shared processors, $j$ is not synchronized, and $j$ is followed by a processing-time ordered synchronized suffix $(j_1,\ldots,j_k)$ in $\cS$.
\item\label{it:antit2} $j$ is not present on the shared processors, and $\cS$ has a processing-time ordered synchronized suffix $(j_1,\ldots,j_k)$ that starts at time $t_1<\p{j}$.
\end{enumerate}

Let $t_1$ and $t_1'$ be such that the piece of the job $j_1$ executes in $(t_1,t_1')$ on the shared processors.
We define an operation of \emph{$j$-filling} in $\cS$ as follows (see Figure~\ref{fig:filling}(a)): for some $t$, $t_1<t\leq t_1'$, the part of $j_1$ executing in time interval $(t_1,t)$ is moved from each shared processor $\Mshared_z$, $z\in\{1,\ldots,m'\}$ to its private processor $\Mpriv_{j_1}$, and it is replaced on $\Mshared_z$ by  $j$ so that the completion time of $j$ on $\Mpriv_j$ decreases by $m'(t-t_1)$, where $m'$ is the number of shared processors used in the interval $(t_1,t_1')$.
The $t\in(t_1,t_2]$ is chosen to be maximum to ensure that the completion time of $j$ on shared processors, which equals $t$, is smaller than or equal to the completion time of $j$ on $\Mpriv_j$.
\begin{figure}[ht!]
\begin{center}
 \includegraphics[scale=1.2]{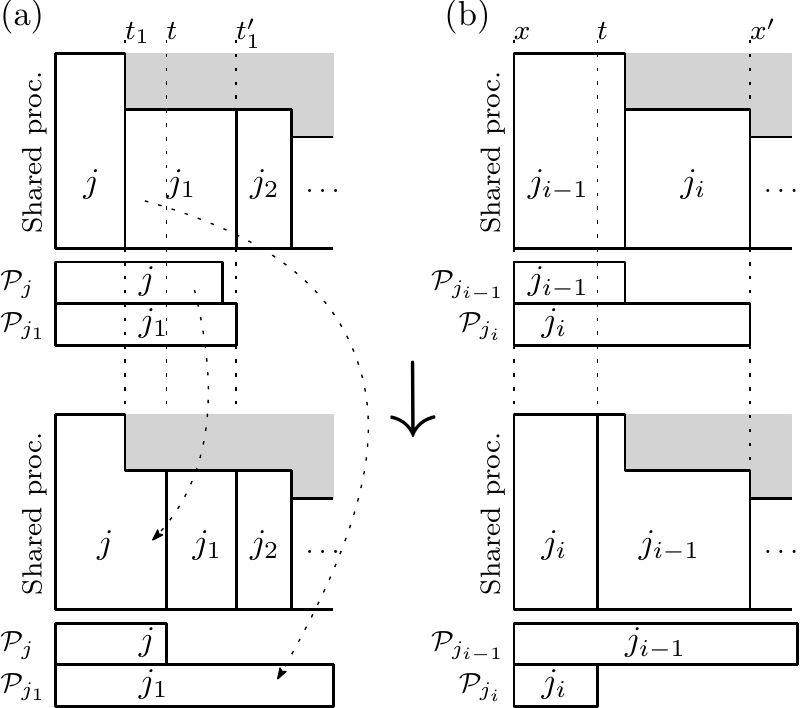}
\end{center}
\caption{(a) $j$-filling for the case when $j$ is present on the shared processors;
         (b) changing the order of jobs $j_{i-1}$ and $j_i$ on the shared processors when $\p{j_{i-1}}>\p{j_i}$}
\label{fig:filling}
\end{figure}
We remark that if $t<t_1'$, then the job $j$ becomes synchronized as a result of $j$-filling --- informally speaking this follows from observation that further increase of $t$ is not possible due to the fact that there is not enough of $j$ in time interval $(t_1,\complTime{\cS}{j}{\Mpriv})$ on private processor $\Mpriv_{j}$ in $\cS$ to fill out the interval $(t_1,t+\varepsilon)$ on the shared processors for any $\varepsilon>0$ (this case is depicted in Figure~\ref{fig:filling}(a)).
On the other hand, if $t=t_2$, then $j$ may not be synchronized  as a result of $j$-filling.

Note that the definition of $j$-filling is valid, it suffices to observe that the maximum $t$ selected indeed satisfies $t_1<t$.
This follows from the assumption that $\p{j}>t_1$ when $j$ is not present on the shared processors, and from the fact that $j$ is not synchronized and directly precedes $j_1$ on the shared processors otherwise.
Note that $\w{j}\geq\w{j_1}$ is sufficient to ensure that as a result of $j$-filling the total weighted overlap does not decrease in comparison to $\cS$.
Since the execution of jobs $j_{2},\ldots,j_k$ does not change as result of $j$-filling, we obtain:
\begin{observation} \label{obs:j-filling-feasible}
Suppose that a schedule $\cS$ and $j$ satisfy the assumptions~\ref{it:antit1} or~\ref{it:antit2} of $j$-filling, and $\w{j}\geq\w{j_1}$.
The operation of $j$-filling gives a feasible schedule $\cS'$ such that $\tct{\cS'}\geq\tct{\cS}$, the job $j_1$ is either not synchronized in $\cS'$ and present on the shared processors (this holds when $t<t_1'$), or the job $j_1$ executes on $\Mpriv_{j_1}$ only and $\complTime{\cS'}{j_1}{\Mpriv}>t_1'$ (this holds when $t=t_1'$) and the suffix $(j_{i+1},\ldots,j_k)$ is processing-time ordered and synchronized in $\cS'$.
\qed
\end{observation}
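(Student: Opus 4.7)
The plan is to verify three separate claims in sequence: (A) that $\cS'$ is a feasible schedule, (B) that $\tct{\cS'}\geq\tct{\cS}$, and (C) the two structural alternatives for $j_1$ together with the preservation of the processing-time ordered synchronized suffix after the first job.

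For (A), I would begin by observing that $j$-filling modifies only the executions of $j$ and $j_1$; all other pieces in $\cS'$ coincide with those in $\cS$. On each of the shared processors $\Mshared_1,\ldots,\Mshared_{m'}$ used in the segment $(t_1,t_1')$, a piece of $j_1$ of length $t-t_1$ is relabeled as a piece of $j$, so the no-overlap condition~\eqref{L2} is inherited from $\cS$. The total shared-processor work of $j$ increases by $m'(t-t_1)$ while its private interval shortens by exactly the same amount, preserving~\eqref{L1} for $j$; symmetrically $j_1$ loses $m'(t-t_1)$ on shared processors and absorbs this work by extending its single private interval to $(0,\complTime{\cS}{j_1}{\Mpriv}+m'(t-t_1))$, which is legitimate since $\Mpriv_{j_1}$ executes only $j_1$. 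The one nontrivial condition is~\eqref{L3} for $j$: the new shared pieces of $j$ end at time $t$, while the new private completion of $j$ is $\complTime{\cS}{j}{\Mpriv}-m'(t-t_1)$. The defining choice of $t$ as the largest value in $(t_1,t_1']$ with $t\leq\complTime{\cS}{j}{\Mpriv}-m'(t-t_1)$ is precisely what makes~\eqref{L3} hold; I would just recall that the validity of this choice (i.e.\ that such $t>t_1$ exists in each of the two cases of the definition) has already been established in the excerpt.

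For (B), the only overlaps that change between $\cS$ and $\cS'$ live in the rectangle $(t_1,t)\times\{\Mshared_1,\ldots,\Mshared_{m'}\}$. Since the suffix $(j_1,\ldots,j_k)$ was processing-time ordered and synchronized in $\cS$, the piece of $j_1$ on $\Mshared_l$ inside $(t_1,t)$ contributed exactly $(t-t_1)(\w{j_1}-\cm{l})$ to $\tct{\cS}$; after relabeling, the new piece of $j$ on $\Mshared_l$ in $(t_1,t)$ contributes $(t-t_1)(\w{j}-\cm{l})$ to $\tct{\cS'}$, and this is a genuine overlap because the new private interval of $j$ still covers $(t_1,t)$ by the choice of $t$. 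For $j_1$, the remaining shared piece in $(t,t_1')$ (if nonempty) still overlaps its (now longer) private interval, so nothing is lost there. Summing over $l=1,\ldots,m'$ gives a net change of $m'(t-t_1)(\w{j}-\w{j_1})\geq 0$ by the hypothesis $\w{j}\geq\w{j_1}$.

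For (C), the dichotomy follows directly from which upper bound in $t=\min\{t_1',\,(\complTime{\cS}{j}{\Mpriv}+m't_1)/(m'+1)\}$ is active. If $t=t_1'$, the piece of $j_1$ on shared processors is consumed entirely, $j_1$ disappears from the shared processors, and $\complTime{\cS'}{j_1}{\Mpriv}=t_1'+m'(t_1'-t_1)>t_1'$. If $t<t_1'$, a positive piece of $j_1$ survives on each $\Mshared_l$ in $(t,t_1')$ while $\complTime{\cS'}{j_1}{\Mpriv}=t_1'+m'(t-t_1)>t_1'$, so $j_1$ is present but no longer synchronized. In both subcases, the pieces and private completion times of $j_2,\ldots,j_k$ are untouched, so the suffix starting at $j_2$ inherits its processing-time ordering and synchronization from $\cS$. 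The only real subtlety is keeping the bookkeeping for~\eqref{L3} on $j$ straight and checking that the algebraic maximum defining $t$ is attained strictly above $t_1$ in both cases of the $j$-filling definition; that is the main place where the hypotheses~\ref{it:antit1} and~\ref{it:antit2} are used, and it is exactly the content of the remark preceding the observation, which I would cite rather than re-derive.
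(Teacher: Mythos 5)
Your proposal is correct and follows essentially the same reasoning the paper relies on: the paper states this observation without a separate proof, justifying it by the remarks immediately preceding it (validity of the maximal choice of $t$, the fact that $\w{j}\geq\w{j_1}$ makes the relabelled overlap in $(t_1,t)$ change the objective by $m'(t-t_1)(\w{j}-\w{j_1})\geq 0$, and the fact that $j_2,\ldots,j_k$ are untouched), and your three-part verification spells out exactly those points, including the correct characterization $t=\min\{t_1',(\complTime{\cS}{j}{\Mpriv}+m't_1)/(m'+1)\}$ driving the dichotomy for $j_1$. No gaps; this is a faithful, slightly more explicit rendering of the paper's argument.
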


We are now ready to prove the main result of this section.
\begin{theorem} \label{thm:antithetical}
For any antithetical instance, there exists an optimal schedule that is processing-time ordered and each job is present on the shared processors.
Moreover, it can be computed in polynomial time.
\end{theorem}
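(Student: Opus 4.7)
The plan is to invoke Corollary~\ref{cor:segments} to start from an optimal processor-descending and sequential schedule and then transform it, without decreasing the total weighted overlap, into one in which every job appears on the shared processors and the jobs on the shared processors are ordered by non-decreasing processing time. Once this structural statement is in place, setting $A=(1,\ldots,n)$ with $\p{1}\leq\cdots\leq\p{n}$ makes an optimal schedule $A$-compatible, so Theorem~\ref{thm:A-compatible} delivers a polynomial-time algorithm via the LP of Section~\ref{sec:LP}.

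I would prove the structural claim by backward induction on the length $s$ of a processing-time ordered synchronized suffix, with the invariant that the suffix collects the $s$ jobs with the largest processing times. To extend the suffix by one more job, I pick the job $j$ with the largest $\p{j}$ among those not yet in the suffix and apply the $j$-filling operation defined just before the theorem, taking the current head $j_1$ of the suffix as the designated job. Since $j$ lies outside the suffix, which already collects the largest-$\p{}$ jobs, we have $\p{j}\leq\p{j_1}$; the antithetical assumption then yields $\w{j}\geq\w{j_1}$, so Observation~\ref{obs:j-filling-feasible} guarantees a feasible schedule with non-decreased total weighted overlap. A direct calculation from the synchronization equation $(m'+1)t_1'=\p{j_1}+m't_1$ shows that the maximal $t$ used by $j$-filling equals $t_1'$ precisely when $\p{j}=\p{j_1}$ and is strictly less than $t_1'$ otherwise; in the strict case $j$ becomes synchronized and extends the suffix by one, while ties are handled by exchanging the roles of $j$ and $j_1$.

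To sort the portion of the schedule that is already on the shared processors but not in the suffix I would also use the adjacent-swap of Figure~\ref{fig:filling}(b), which exchanges two consecutive synchronized segments carrying jobs $j_{i-1}$ and $j_i$ with $\p{j_{i-1}}>\p{j_i}$. A routine algebraic computation using the synchronization relations of Section~\ref{sec:outline} together with the antithetical inequality $\w{j_{i-1}}\leq\w{j_i}$ shows the swap does not decrease the objective. Whenever either of these local moves temporarily damages the processor-descending or sequential structure, I reapply Procedure~$\procSequential$ of Lemma~\ref{lem:segments} to restore it at no cost, so the transformations compose coherently.

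The main obstacle will be controlling the interplay between the $j$-filling, which can temporarily break the synchronization of the block it acts on, and the adjacent swap, which is cleanest on synchronized segments, as well as proving termination. I would absorb these complications with a lexicographic potential whose first component is the number of jobs absent from the shared processors and whose second component is the number of $\p{}$-inversions among the jobs that are on the shared processors: each successful $j$-filling weakly decreases the first, each swap strictly decreases the second, and neither operation ever increases either component, so only finitely many operations are needed. Composing the resulting existence statement with Theorem~\ref{thm:A-compatible} then yields the desired polynomial-time algorithm and completes the proof.
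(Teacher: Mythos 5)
Your overall scaffolding (start from a synchronized optimal schedule, use $j$-filling and the adjacent swap of Figure~\ref{fig:filling}(b), finish with Theorem~\ref{thm:A-compatible}) uses the same building blocks as the paper, but the way you assemble them has a genuine gap. Your induction step asserts that applying $j$-filling with the suffix head $j_1$ ``extends the suffix by one'' once $j$ becomes synchronized. That is not what $j$-filling does: by Observation~\ref{obs:j-filling-feasible}, after the operation the \emph{former head} $j_1$ is either left on the shared processors \emph{unsynchronized}, or is pushed entirely onto its private processor with $\complTime{\cS'}{j_1}{\Mpriv}>t_1'$. So a single filling never hands you back a longer processing-time ordered synchronized suffix; it merely moves the synchronization defect one position to the right. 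Your lexicographic potential (jobs absent from shared processors, then inversions) does not measure this defect, so your invariant cannot be maintained and the process need not terminate in a synchronized, all-jobs-present schedule. The paper's proof is not a direct construction at all: it is a proof by contradiction in which the defect created by one swap (or by inserting a missing job) is deliberately \emph{cascaded} by successive $j_{i'}$-fillings all the way to the last job $j_k$, which then ends up unsynchronized or absent; such a schedule is visibly improvable (a piece of $j_k$ can be moved onto the cheapest shared processor past the current end of the shared schedule), and since the fillings never decreased the objective this contradicts optimality. That final ``sink'' for the defect is the essential idea missing from your argument.

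Two smaller points. First, your claim that the adjacent swap ``does not decrease the objective'' by a routine computation is not something you can take for granted and is not how the paper uses the swap; the non-decrease is obtained for the composite transformation via Observation~\ref{obs:j-filling-feasible}, and the contradiction comes from non-optimality of the end result, not from a gain at the swap. Second, your formula $(m'+1)t_1'=\p{j_1}+m't_1$ addresses the wrong constraint: the maximal $t$ in $j$-filling is limited by how much of $j$ is available on $\Mpriv_j$ in $(t_1,\complTime{\cS}{j}{\Mpriv})$, i.e.\ by $j$'s own completion time, not by whether $\p{j}=\p{j_1}$. The concluding reduction to the LP of Section~\ref{sec:LP} is correct and identical to the paper's.
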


\begin{proof}
Consider an optimal schedule $\cS$ for an antithetical instance.
By Theorem~\ref{thm:synchronized} we may assume that $\cS$ is synchronized.
Let $(j_1,\ldots,j_k)$ be the ordering of jobs in $\cS$.
We first argue that $\cS$ is processing-time ordered.
We prove this by contradiction: take the largest index $i$ such that $\p{j_{i-1}}>\p{j_i}$.
Swap the jobs $j_{i-1}$ and $j_i$ on the shared processors as follows (see Figure~\ref{fig:filling}(b)):
Suppose that $j_{i-1}$ and $j_i$ occupy the interval $(x,x')$ on the shared processors in $\cS$.
Find the $t$, $x<t\leq x'$, so that when replacing $j_{i-1}$ by $j_i$ in time interval $(x,t)$ on each shared processor and executing $j_i$ in $(0,t)$ on its private processor results in $j_i$ having the total execution time equal to $\p{j_i}$. Thus, $j_i$ remains synchronized.
Finally execute $j_{i-1}$ in time interval $(t,x')$ on each shared processor on which $j_{i-1}$ or $j_i$ was initially present, and execute the remainder of $j_{i-1}$ on its private processor.
The fact that $\p{j_{i-1}}>\p{j_i}$ implies that this swap gives a feasible schedule and that $j_{i-1}$ is no longer synchronized.
By the maximality of $j$, the suffix $(j_{i+1},\ldots,j_k)$ in the resulting schedule is processing-time ordered and synchronized (as nothing in the suffix has changed with respect to $\cS$).
Perform $j_{i-1}$-filling to $\cS$, and then for each $i':=i+1,\ldots,k-1$ (in this order) apply $j_{i'}$-filling, obtaining a final schedule $\cS'$.
By Observation~\ref{obs:j-filling-feasible}, $\cS'$ is feasible, $\tct{\cS'}\geq\tct{\cS}$ and the job $j_k$ is either not synchronized or not present on the shared processors in $\cS'$.
In both cases we obtain that $\cS'$ is not optimal (observe that $j_k$ completes later on its private processor than the last job completes on shared processors in $\cS'$ which is obviously not optimal since some part of $j_k$ can be moved to the cheapest shared processor and thus increase the overlap), which contradicts the optimality of $\cS$.
Thus, we have proved that $\cS$ is processing-time ordered.

We now prove that $\jobs=\{j_1,\ldots,j_k\}$, i.e, all jobs are present on the shared processors in $\cS$.
By contradiction, let $j\notin\{j_1,\ldots,j_k\}$.
If $\p{j}\geq\p{j_k}$, then $\cS$ is not optimal and we immediately obtain a contradiction.
Otherwise, since $\cS$ is processing-time ordered and synchronized, as we showed earlier in the proof, there is a suffix $(j_i,\ldots,j_k)$ of $\cS$ for which the condition~\ref{it:antit2} of $j$-filling is satisfied ($i$ is the maximum index such that $\p{j}>\p{j_i}$).
Perform the $j$-filling and then iteratively for $i':=i,\ldots,k-1$ (in this order) perform $j_{i'}$-filling obtaining the final $\cS'$.
Again by Observation~\ref{obs:j-filling-feasible}, $\cS'$ is feasible, $\tct{\cS'}\geq\tct{\cS}$ and $j_k$ is either not present on shared processors or is not synchronized, giving us the required contradiction.

Finally, the LP (see Theorem~\ref{thm:A-compatible}) gives the optimal processing-time ordered schedule in polynomial time.
\end{proof}

\section{Structure of Optimal Schedules} \label{sec:structure}

This section proves Theorem~\ref{thm:synchronized} that was announced earlier in the paper. By Corollary \ref{cor:segments} we can limit ourselves to processor-descending and sequential schedules $\cS$. Those schedules may not be synchronized for a number of reasons:  a job may appear in more than one segment of $\cS$, we call this a split of the job, or even if each job appears in at most one segment of $\cS$ some jobs may not be synchronized by finishing on shared processors earlier than on their private processors. We need to show how to remove these undesirable configurations from processor-descending and sequential schedules  to produce synchronized schedules without decreasing the total weighted overlap in the process. This removal affects schedules and their total weighted overlaps in a quite complicated way that requires sometimes delaying parts of the schedules whereas at other times their advancing in order not to reduce the total wighted overlap. We describe the main building block of the transformation, we call it \emph{modification}, and its key properties in the next subsection. The modification will be used in Subsection \ref{sec:split} to remove the splits, and in Subsection \ref{sec:synchronization} to synchronize jobs.

\subsection{Towards Schedule Synchronization} \label{sec:xi}

Let $\cS$ be a processor-descending and sequential schedule. Suppose  $\cS$ has $\ell$ segments, $S_1,\ldots,S_{\ell}$ and the $i$-th segment executes jobs $j_{i,1},\ldots,j_{i,l(i)}$ in time intervals $(s_{i,1},e_{i,1}),\ldots,(s_{i,l(i)},e_{i,l(i)})$, respectively. Define $T(\cS)=\{s_{i,k}, e_{i,k}: i=1,\ldots,\ell; k=1,\ldots,l(i)\}$ to be the set of all time points $t$ such that some piece of a job starts or ends at $t$ on a shared processor. Let $m_{i,k}$ be the number of shared processors used by $\cS$ in the interval $(s_{i,k},e_{i,k})$. Observe that this number remains the same for each interval in a segment $i$ thus we denote it by $m_i$ and we refer to it as the \emph{width} of the interval $(s_{i,k},e_{i,k})$.  We define the \emph{factor} $m^+_{i,k}$ and the \emph{radius} $r_{i,k}$ of the interval $(s_{i,k},e_{i,k})$ as follows.
Let a job $j_{i,k}$ execute in an interval $(s_{i,k},e_{i,k})$, if $e_{i,k}=\complTime{\cS}{j_{i,k}}{\Mpriv}$, then $m^+_{i,k}=m_i+1$ and $r_{i,k}=\min\{e_{i,k}-s_{i,k},p_{j_{i,k}}-e_{i,k}\}$. Otherwise, if $e_{i,k}<\complTime{\cS}{j_{i,k}}{\Mpriv}$, then  $m^+_{i,k}=m_i$ and $r_{i,k}=\min\{e_{i,k}-s_{i,k},\complTime{\cS}{j_{i,k}}{\Mpriv}-e_{i,k}\}$.

In this section we define a transformation of $\cS$ that would be used to make it synchronized.
The transformation is multi-step which in each step is defined as a function $\xi$ that takes a schedule $\cS$, a time point $t\in T(\cS)$ and a \emph{shift} 
$\varepsilon\in\reals$ as an input, and produces a schedule $\cS'$ and a new shift $\varepsilon'$ as output.

Before giving its formal description, we start with some informal intuitions.
We consider three basic steps that make up the whole transformation.
The first step is the \emph{base} step of our transformation: this case simply moves the endpoint of the last job piece of the entire schedule, i.e., the piece that ends at $e_{\ell,l(\ell)}$.
If $\varepsilon>0$, then this piece is moved to the right (i.e., it completes later in $\cS'$ than in $\cS$), and if $\varepsilon<0$, then this piece advances in $\cS'$ with respect to $\cS$.

The \emph{main} step  is subdivided  into two subcases.
In the \emph{first} subcase we consider a piece of a job $j$ that ends earlier on shared processors than on the private processor.
Note that if this is the last piece of  $j$, then it implies that $j$ is not synchronized.
However, it may also happen that this is not the last piece of $j$ but $j$ itself is synchronized as there may be another piece of $j$ in one of the subsequent segments of $\cS$.
In this situation the endpoint of the piece of $j$ is just moved (on each shared processor) to the right or to the left (according to whether $\varepsilon>0$ or $\varepsilon<0$, respectively).

In the \emph{second} subcase we consider the last piece of a job $j$ that is synchronized.
Then, we shift both the endpoint of the piece of $j$ on shared processors and, by the same amount, the endpoint on the private processor.
In this way the job remains synchronized.

For each step we define a payoff value that tells how much the total weighted overlap of the schedule changes by doing the step.
We need to keep in mind the multi-step nature of the entire transformation.
Typically the transformation starts with some $\cS$, time point $t=e_{i,b}$ and $\varepsilon$, then it will subsequently trigger changes for the same parameter $\cS$, subsequent time points
\[e_{i,b+1},\ldots,e_{i,l(i)},\quad\ldots\quad,e_{i+1,1},\ldots,e_{i+1,l(i+1)},\quad\ldots\quad,e_{\ell,1},\ldots,e_{\ell,l(\ell)},\]
and different values of $\varepsilon$.

We now give a description of these three steps and then an example that depicts all of them follows (see Figure~\ref{fig:xi}).
The changes introduced to $\cS$ in each of these steps are referred to as one step \emph{modifications}.

\noindent
\textbf{Base Step.}
The assumption of this step is that $t=e_{\ell,l(\ell)}$, i.e., $t$ is the end of the last job $j_{\ell,l({\ell})}$ of the last segment of $\cS$.
We call this job the \emph{job of the modification} and denote by $j$ for convenience.
We may assume without loss of generality that $j$ is synchronized and hence $t=\complTime{\cS}{j}{\Mpriv}$. 
The modification is \emph{doable} if
\begin{equation} \label{eq:trans1:base1}
\varepsilon\in\left[- m^+_{\ell,l(\ell)}r_{\ell,l(\ell)}, m^+_{\ell,l(\ell)}r_{\ell,l(\ell)}\right].
\end{equation}
For the doable modification, we set the completion time of  $j$ on processors $\Mshared_1,\ldots,\Mshared_{m_{\ell}}$ and $\Mpriv_j$ in $\cS'$ to
\begin{equation} \label{eq:trans1:base2}
e_{\ell,l(\ell)}+\frac{\varepsilon}{m^+_{\ell,l(\ell)}}.
\end{equation}
This transformation is denoted by $\xi(\cS,t,\varepsilon)$ and its \emph{payoff} equals
\begin{equation} \label{eq:trans1:base3}
\payoff{\xi(\cS,t,\varepsilon)} = \frac{\varepsilon}{m^+_{\ell,l(\ell)}} \sum_{z=1}^{m_{\ell}}(\w{j}-\cm{z}).
\end{equation}
This completes the description of the base step.

In the two remaining steps, we assume $t=e_{i,b}$ where $i<\ell$ or $b<l(\ell)$ and use some common notation for both.
Let for brevity $j=j_{i,b}$.
Note that the interval that immediately follows $(s_{i,b},e_{i,b})$ is either $(s_{i,b+1}=e_{i,b},e_{i,b+1})$ when $b<l(i)$, i.e., when  $j$ is not the last in the segment $S_i$ or $(s_{i+1,1}=e_{i,b},e_{i+1,1})$ when $b=l(i)$, i.e.,  $j$ is last in the segment $S_i$. Let 
$m'=m_i$ in the former case, and 
$m'=m_{i+1}$ in the latter case. Finally, let $j'$ be the job in the interval that immediately follows $(s_{i,b},e_{i,b})$.

\medskip
\noindent
\textbf{Main Step~I.}
The assumption of this step is that $e_{i,b}< \complTime{\cS}{j}{\Mpriv}$. 
We say that the modification is \emph{doable} if
\begin{equation} \label{eq:trans1:CaseI1}
\varepsilon\in\left[ -m^+_{i,b}r_{i,b}, m^+_{i,b}r_{i,b} \right],
\end{equation}
For the doable modification set the completion time of $j$ on processors $\Mshared_1,\ldots,\Mshared_{m_i}$ equal to the start time of  $j'$ on shared processors $\Mshared_1,\ldots,\Mshared_{m'}$ to 
\begin{equation} \label{eq:trans1:CaseI2}
t' = e_{i,b}+\frac{\varepsilon}{m^+_{i,b}},
\end{equation}
and denote $\nextEps{\varepsilon}=\varepsilon \frac{m'}{m^+_{i,b}}$.
The \emph{payoff} is
\begin{equation} \label{eq:trans1:CaseI3}
\payoff{\xi(\cS,t,\varepsilon)} = 
\frac{\varepsilon}{m^+_{i,b}} \left( \sum_{z=1}^{m_i}(\w{j}-\cm{z}) - \sum_{z=1}^{m'}(\w{j'}-\cm{z}) \right).
\end{equation}

\noindent
\textbf{Main Step~II.}
The assumption of this step is:
\begin{equation} \label{eq:trans1:CaseII1}
e_{i,b} = \complTime{\cS}{j}{\Mpriv}.
\end{equation}
We say that the modification is \emph{doable} if
\begin{equation} \label{eq:trans1:CaseII2}
\varepsilon\in\left[ -m^+_{i,b}r_{i,b}, m^+_{i,b}r_{i,b} \right]
\end{equation}
For the doable modification set the completion time of  $j$ on processors $\Mshared_1,\ldots,\Mshared_{m_i}$ and $\Mpriv_j$ equal to the start time of  $j'$ on shared processors $\Mshared_1,\ldots,\Mshared_{m'}$ to 
\begin{equation} \label{eq:trans1:CaseII3}
t' = e_{i,b}+\frac{\varepsilon }{m^+_{i,b}},
\end{equation}
and denote $\nextEps{\varepsilon}=\varepsilon \frac{m'}{m^+_{i,b}}$.
The \emph{payoff} is then
\begin{align} \label{eq:trans1:CaseII4}
\begin{split}
\payoff{\xi(\cS,t,\varepsilon)} = 
 \frac{\varepsilon}{m^+_{i,b}} \left( \sum_{z=1}^{m_i}(\w{j}-\cm{z}) - \sum_{z=1}^{m'}(\w{j'}-\cm{z}) \right).
\end{split}
\end{align}

This completes the description of all cases of our transformation --- see Figure~\ref{fig:xi} for an example.
\begin{figure}[ht!]
\begin{center}
 \includegraphics[scale=1]{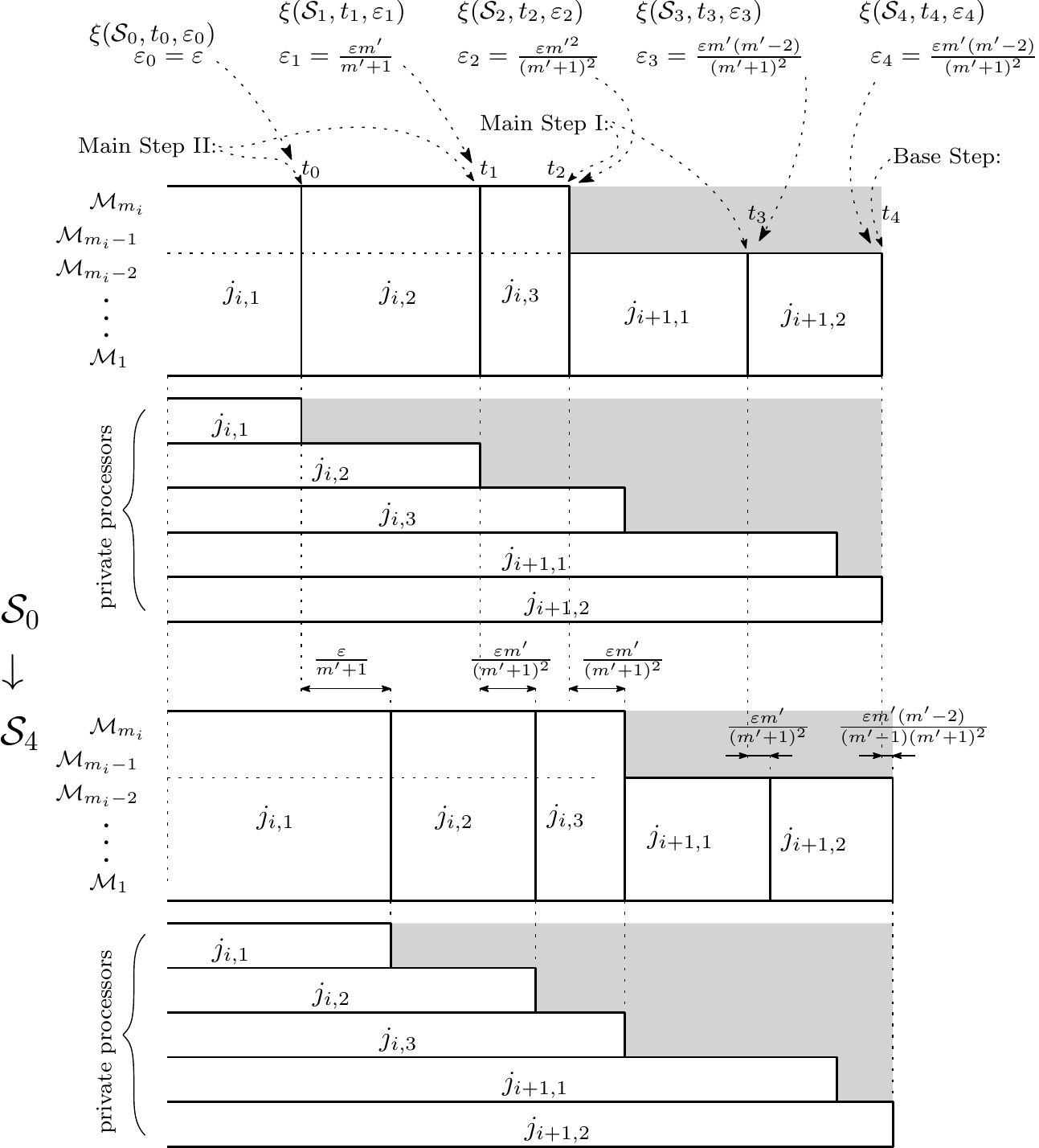}
\end{center}
\caption{In this example we consider two consecutive segments, which have three and two job pieces, respectively. We consider executing $\xi(\cS_i,t_i,\varepsilon_i)$ for $i=0,\ldots,4$, where $\varepsilon_0=\varepsilon$ is positive. All five modifications are doable but note that the job $j_{i,3}$ is synchronized in $\cS_4$ but is not synchronized in $\cS$. Hence, according to Condition~\eqref{eq:trans1:CaseI1} of Main Step~I that handles this modification, this is the maximum $\varepsilon>0$ for which all five modifications are doable.}
\label{fig:xi}
\end{figure}

Let $t_0<t_1<\cdots<t_q$, $q\geq 0$, be the last $q+1$ end points in the sequence $e_{1,1}<\cdots<e_{\ell,l(\ell)}$ of the schedule $\cS$. Let $I_1=(t_0,t_1),\ldots,I_q=(t_{q-1},t_q)$ be the last $q$ intervals of $\cS$. Let $j_i$, $m_i$, $m^+_i$, and $r_i$ be the job, the width, the factor, and the radius of the interval ending at $t_i$, $i=0,\ldots,q$. The $q+1$ step modification starts with $\cS_0=\cS$, $t=t_0$ and an $\varepsilon=\varepsilon_0$ such that
\begin{equation}\label{eps}
0<|\varepsilon|<\min_{i=0,\ldots,q}\{m_ir_i/2\},
\end{equation}
and recursively builds schedules $\cS_1,\ldots,\cS_{q+1}$ using the one step modifications just described such that $\cS_i=\xi(\cS_{i-1}, t_{i-1},\varepsilon_{i-1})$, where each subsequent value of $\varepsilon_i$ is computed on the basis of the previous one as follows: $\varepsilon_i=\nextEps{\varepsilon_{i-1}}$ for each $i\in\{1,\ldots,q+1\}$.
Finally, the $\cS_{q+1}=\xi(\cS_{q}, t_{q},\varepsilon_{q})$ is always the Base Step.
We say that the $q+1$ step modification is \emph{doable} if all its one step modifications are doable, i.e. all $\varepsilon_0, \ldots, \varepsilon_{q}$ satisfy appropriate condition in~\eqref{eq:trans1:base1}, \eqref{eq:trans1:CaseI1} and~\eqref{eq:trans1:CaseII2}. We later show that the initial choice of
$\varepsilon_0$ that meets (\ref{eps}) guarantees that the $q+1$ step modification is doable.
Observe that  by definition
\begin{equation}\label{epsi}
\varepsilon_i= \nextEps{\varepsilon_{i-1}} = \varepsilon_{i-1} \frac{m_{i}}{m^+_{i-1}}.
\end{equation}
Therefore the points $t_0, t_1, \ldots, t_q$ and the shifts $\varepsilon_0,\ldots, \varepsilon_{q}$ can be readily calculated from $\cS$ and $\varepsilon$.
We summarize this in the following corollary.
\begin{corollary} \label{cor:epsilon}
Consider a doable $q+1$ step modification that starts with $\cS_0=\cS$, $t=t_0$ and  $\varepsilon=\varepsilon_0$.
Then, 
\[\varepsilon_i = \varepsilon
\prod_{z=1}^{i} \frac{m_{z}}{m_{z-1}^+}
\]
for each $i\in\{1,\ldots,q\}$.
\qed
\end{corollary}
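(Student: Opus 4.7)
The plan is essentially to unroll the recurrence that was already recorded as equation~(\ref{epsi}). That equation states $\varepsilon_i=\nextEps{\varepsilon_{i-1}}=\varepsilon_{i-1}\cdot m_i/m_{i-1}^+$ for each $i\in\{1,\ldots,q\}$. The doability hypothesis is only needed to guarantee that the recursion actually produces valid one-step modifications at every index $i$, so that each $\varepsilon_i$ is well-defined; it plays no role in the algebraic manipulation itself. Consequently the claim is a pure telescoping identity, and I would dispose of it by a short induction on $i$.

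For the base case $i=1$, the recurrence gives $\varepsilon_1=\varepsilon_0\cdot m_1/m_0^+=\varepsilon\cdot m_1/m_0^+$, which is exactly the stated product formula specialized to a single factor. For the inductive step, assuming that $\varepsilon_{i-1}=\varepsilon\prod_{z=1}^{i-1}m_z/m_{z-1}^+$, one multiplies both sides by $m_i/m_{i-1}^+$ and invokes (\ref{epsi}) to conclude
\[
\varepsilon_i \;=\; \varepsilon_{i-1}\cdot\frac{m_i}{m_{i-1}^+} \;=\; \varepsilon\prod_{z=1}^{i}\frac{m_z}{m_{z-1}^+},
\]
completing the induction.

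There is no real obstacle here; the conceptual work was done when the one-step modifications were defined and the recurrence relation between successive shifts was extracted as (\ref{epsi}). The only thing to double-check in writing the proof is that the doability of the full $q+1$-step modification indeed guarantees that the recurrence chain $\varepsilon_0\mapsto\varepsilon_1\mapsto\cdots\mapsto\varepsilon_q$ is executed without interruption, but this is immediate from the definition of doable given just above the statement of the corollary.
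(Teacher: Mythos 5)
Your proposal is correct and matches the paper's own (implicit) argument: the corollary is stated as an immediate consequence of the recurrence $\varepsilon_i=\nextEps{\varepsilon_{i-1}}=\varepsilon_{i-1}\,m_i/m_{i-1}^+$ recorded just above it, and your induction simply makes that unrolling explicit.
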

We remark that we will use later the fact that each $\varepsilon_i$ is linearly dependent on the $\varepsilon$.

\medskip
For a doable $q+1$ step modification that starts with $\cS_0=\cS$, $t=t_0$ and  $\varepsilon=\varepsilon_0$,
by~\eqref{eq:trans1:base3},~\eqref{eq:trans1:CaseI3} and~\eqref{eq:trans1:CaseII4}
the payoff can be written as follows
\begin{equation} \label{eq:payoff}
\Delta(\cS_{0},t_{0},\varepsilon_0)=\sum_{k=0}^{q}\payoff{\xi(\cS_{k},t_{k},\varepsilon_{k})} = \sum_{i=0}^{q} \frac{\varepsilon_i}{m_i^+} \left( \sum_{z=1}^{m_i}(\w{j_i}-\cm{z}) - \sum_{z=1}^{m_{i+1}}(\w{j_{i+1}}-\cm{z}) \right),
\end{equation}
where $m_{q+1}=0$.
We conclude from~\eqref{eq:payoff} the following.
\begin{corollary} \label{cor:payoff}
For a doable $q+1$ step modification that starts with $\cS_0=\cS$, $t=t_0$ and  $\varepsilon=\varepsilon_0$ it holds
\[\Delta(\cS_{0},t_{0},\varepsilon_0)=
\frac{\varepsilon_0}{m_0^+}\sum_{z=0}^{m_0}(\w{j_0}-\cm{z}) + \sum_{i=1}^{q} \left( \frac{\varepsilon_i}{m_i^+} - \frac{\varepsilon_{i-1}}{m_{i-1}^+} \right) \sum_{z=1}^{m_i}(\w{j_i}-\cm{z}).\]
\qed
\end{corollary}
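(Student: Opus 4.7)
The plan is to prove the identity by a direct algebraic rearrangement (Abel summation by parts) of the already-established formula~\eqref{eq:payoff}. Introducing the abbreviation $A_i=\sum_{z=1}^{m_i}(\w{j_i}-\cm{z})$, that formula reads
\[\Delta(\cS_{0},t_{0},\varepsilon_0)=\sum_{i=0}^{q}\frac{\varepsilon_i}{m_i^+}(A_i-A_{i+1}),\]
under the stated convention $m_{q+1}=0$, which in particular forces $A_{q+1}=0$.

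First, I would split the telescoped difference as
\[\sum_{i=0}^{q}\frac{\varepsilon_i}{m_i^+}A_i\;-\;\sum_{i=0}^{q}\frac{\varepsilon_i}{m_i^+}A_{i+1},\]
and perform the index shift $i\mapsto i-1$ in the second sum, turning it into $\sum_{i=1}^{q+1}\frac{\varepsilon_{i-1}}{m_{i-1}^+}A_i$. The boundary term at $i=q+1$ drops because $A_{q+1}=0$.

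Then I would pull the $i=0$ summand out of the first sum and combine the two resulting sums over $i=1,\ldots,q$ by collecting coefficients of each $A_i$. This yields
\[\frac{\varepsilon_0}{m_0^+}A_0+\sum_{i=1}^{q}\left(\frac{\varepsilon_i}{m_i^+}-\frac{\varepsilon_{i-1}}{m_{i-1}^+}\right)A_i,\]
which, after unfolding the abbreviation, is exactly the right-hand side in the statement of the corollary.

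There is essentially no obstacle here: the argument is a purely algebraic manipulation of a finite sum, and the only fact used beyond~\eqref{eq:payoff} is the convention $m_{q+1}=0$ that annihilates the boundary term produced by the reindexing. Doability of the modification enters only through~\eqref{eq:payoff} itself, which is already assumed in the hypothesis of the corollary, so no additional restrictions on $\varepsilon_0$ need to be invoked at this step; the linear dependence of each $\varepsilon_i$ on $\varepsilon_0$ established in Corollary~\ref{cor:epsilon} is preserved automatically.
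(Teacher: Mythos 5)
Your proposal is correct and is exactly the computation the paper intends: the corollary is stated as an immediate consequence of~\eqref{eq:payoff}, and the summation by parts (reindexing the $A_{i+1}$ terms and using $m_{q+1}=0$ to kill the boundary term) is the only step involved. The sole discrepancy is the lower index $z=0$ in the first sum of the corollary's statement, which is a typo in the paper for $z=1$; your derivation correctly produces $\frac{\varepsilon_0}{m_0^+}\sum_{z=1}^{m_0}(\w{j_0}-\cm{z})$.
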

Motivated by Corollaries~\ref{cor:epsilon} and~\ref{cor:payoff}, we introduce the following function for each $t\in T(\cS)$:
\[R(\cS,t) = \frac{1}{m_0^+}\sum_{z=1}^{m_0}(\w{j_0}-\cm{z}) + \sum_{i=1}^q \left(\frac{1}{m^+_i} \prod_{z=1}^{i}\frac{m_z}{m_{z-1}^+} - \frac{1}{m^+_{i-1}}\prod_{z=1}^{i-1}\frac{m_z}{m_{z-1}^+} \right) \sum_{z=1}^{m_i}(\w{j_i}-\cm{z}),\]
which we call the \emph{rate} of a doable $q+1$ step modification that starts with $\cS_0=\cS$, $t=t_0$ and  $\varepsilon=\varepsilon_0$.
We stress out that the rate is the same regardless of the value of $\varepsilon$ chosen for the modification.
In other words, the function $R$ depends only on the schedule $\cS$ and the time point $t\in T(\cS)$.
By Corollaries~\ref{cor:epsilon} and~\ref{cor:payoff} we obtain:
\begin{corollary} \label{cor:rate}
For a doable $q+1$ step modification that starts with $\cS_0=\cS$, $t=t_0$ and $\varepsilon=\varepsilon_0$,
$\Delta(\cS_{0},t_{0},\varepsilon_0)=\varepsilon\cdot R(\cS_0,t_0)$.
\qed
\end{corollary}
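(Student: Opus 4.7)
The plan is simply to combine Corollaries~\ref{cor:epsilon} and~\ref{cor:payoff} by direct substitution, factoring $\varepsilon$ out of the payoff expression. The key observation is that by Corollary~\ref{cor:epsilon} every $\varepsilon_i$ is of the form $\varepsilon$ times a product $\prod_{z=1}^{i} m_z/m_{z-1}^+$ that depends only on $\cS_0$ and $t_0$ and not on $\varepsilon$; in particular, the payoff $\Delta(\cS_0,t_0,\varepsilon_0)$ is linear in $\varepsilon$.

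First I would substitute $\varepsilon_i = \varepsilon\prod_{z=1}^{i} m_z/m_{z-1}^+$ into the expression for $\Delta(\cS_0,t_0,\varepsilon_0)$ given by Corollary~\ref{cor:payoff}. This turns each occurrence of $\varepsilon_i/m_i^+$ into $\varepsilon$ times a rational expression in the widths $m_z$ and factors $m_z^+$ alone. Then I would pull the common factor $\varepsilon$ outside the whole sum and compare the remaining bracketed coefficient term-by-term with the definition of $R(\cS_0,t_0)$. They coincide exactly: using the empty-product convention $\prod_{z=1}^{0}=1$, the $i=0$ summand of Corollary~\ref{cor:payoff} contributes $\frac{\varepsilon}{m_0^+}\sum_{z=1}^{m_0}(\w{j_0}-\cm{z})$, matching the leading term of $R$; and for each $i\ge 1$ the quantity $(\varepsilon_i/m_i^+ - \varepsilon_{i-1}/m_{i-1}^+)/\varepsilon$ is precisely the bracketed factor multiplying $\sum_{z=1}^{m_i}(\w{j_i}-\cm{z})$ in the definition of $R$.

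The proof is therefore a direct computation; the main (and only) obstacle is purely bookkeeping, namely being careful with the empty-product convention at $i=0$ and with the boundary convention $m_{q+1}=0$ used in the payoff formula of Corollary~\ref{cor:payoff}. No structural idea beyond the two earlier corollaries is required, and the linearity of $\Delta$ in $\varepsilon$ noted after Corollary~\ref{cor:epsilon} is exactly what makes the factorization legitimate.
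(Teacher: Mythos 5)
Your proposal is correct and matches the paper's own argument exactly: the corollary is obtained by substituting the formula for $\varepsilon_i$ from Corollary~\ref{cor:epsilon} into the payoff expression of Corollary~\ref{cor:payoff}, factoring out $\varepsilon$, and recognizing the remaining coefficient as the definition of $R(\cS_0,t_0)$. The bookkeeping points you flag (the empty-product convention at $i=0$ and the boundary convention $m_{q+1}=0$) are precisely the only details involved.
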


\medskip
Note that a doable $q+1$ step modification that starts with $\cS_0=\cS$, $t=t_0$ and  $\varepsilon=\varepsilon_0$ does not produce a feasible schedule $\cS'$. 
More precisely, the schedule $\cS'$ is not feasible since the total amount of the job $j_0$ equals $\p{j_0}+\varepsilon$ (note that this is the job $j_{i,1}$ in the example from Figure~\ref{fig:xi}) in $\cS'$.
We summarize the properties of $\cS'$ in the following lemmas.
\begin{lemma} \label{lem:xi-valid}
Let $\cS$ be a processor-descending and sequential schedule. A doable $q+1$ step modification that starts with $\cS_0=\cS$, $t=t_0$ and  $\varepsilon=\varepsilon_0$ that meets (\ref{eps}) results in $\cS'$ that satisfies the following conditions:
\begin{enumerate} [label={\normalfont{(\roman*)}}]
 \item\label{it:xi-v-0} the completion time of each job $j$ on each shared processor is smaller than or equal to $\complTime{\cS'}{j}{\Mpriv}$,
 \item\label{it:xi-v-1} the total execution time of each job $j\neq j_0$ equals $\p{j}$ in $\cS'$ ,
 \item\label{it:xi-v-2} the total execution time of $j_0$ in $\cS'$ is $\p{j_0}+\varepsilon$,
 \item\label{it:xi-v-4} no two pieces of jobs overlap in $\cS'$.
\end{enumerate}
\end{lemma}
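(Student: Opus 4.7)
The plan is to decompose the composite modification into its $q+1$ one-step modifications and to track, for each $k\in\{0,\ldots,q\}$, the shift $\Delta_k:=\varepsilon_k/m^+_k$ applied at the boundary $t_k$ (together with the same shift of $\complTime{\cS}{j_k}{\Mpriv}$ in the Main Step~II and Base Step cases). Each $\Delta_k$ is linear in $\varepsilon$, and the recurrence $\varepsilon_k=m_k\varepsilon_{k-1}/m^+_{k-1}$ from Corollary~\ref{cor:epsilon} yields the pivotal identity $m_k\Delta_{k-1}=m^+_k\Delta_k$, which will drive every cancellation below.

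For~\ref{it:xi-v-4}, since $\cS$ is processor-descending the widths are non-increasing along the time axis, so $m_{k+1}\leq m_k\leq m^+_k$, whence all $\Delta_k$ share a common sign and $|\Delta_k|\leq|\Delta_0|\leq|\varepsilon|/m^+_0$. Combining $|\Delta_k-\Delta_{k+1}|\leq 2|\Delta_0|$ with the bound $|\varepsilon|<m_{k+1}r_{k+1}/2$ from~(\ref{eps}) and $m_{k+1}\leq m^+_0$ gives $|\Delta_k-\Delta_{k+1}|<r_{k+1}\leq t_{k+1}-t_k$; the perturbed boundaries $t_k+\Delta_k$ therefore remain strictly ordered, each modified interval keeps positive length, and no two pieces overlap on any shared processor.

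For~\ref{it:xi-v-0}, let $L(j):=\max\{k\colon j_k=j\}$ for any job $j$ appearing among $j_0,\ldots,j_q$ (others are untouched). By the strict ordering established above, the end of $j$'s last piece on any shared processor is bounded by $t_{L(j)}+\Delta_{L(j)}$. In Main Step~II or Base Step, the private completion shifts by the same $\Delta_{L(j)}$, preserving the equality $\complTime{\cS'}{j}{\Mpriv}=t_{L(j)}+\Delta_{L(j)}$; in Main Step~I, $\complTime{\cS'}{j}{\Mpriv}=\complTime{\cS}{j}{\Mpriv}$, and the definition of the radius gives $r_{L(j)}\leq\complTime{\cS}{j}{\Mpriv}-t_{L(j)}$, so the bound $|\Delta_{L(j)}|<r_{L(j)}$ ensures $t_{L(j)}+\Delta_{L(j)}\leq\complTime{\cS'}{j}{\Mpriv}$.

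Finally, for~\ref{it:xi-v-1} and~\ref{it:xi-v-2}, I would tally for each $k$ the total change in $j_k$'s execution induced by the two shifts enclosing its piece in $(t_{k-1},t_k)$: the shared-processor length grows by $\Delta_k-\Delta_{k-1}$ on each of the $m_k$ processors, and, in Main Step~II or Base Step, the private completion additionally grows by $\Delta_k$. Substituting $m^+_k\Delta_k=m_k\Delta_{k-1}$, every case (Main Step~I with $m^+_k=m_k$, Main Step~II or Base Step with $m^+_k=m_k+1$) gives net change $0$ whenever $k\geq 1$; for $k=0$ there is no left boundary to balance, so the change equals $m^+_0\Delta_0=\varepsilon_0=\varepsilon$. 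Summing over $k$ --- and noting that a job coinciding with several $j_k$'s (splits) simply collects one zero contribution per extra appearance --- establishes~\ref{it:xi-v-1} and~\ref{it:xi-v-2}. The main subtlety is this split-case bookkeeping, which the per-step cancellation renders transparent.
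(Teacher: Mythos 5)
Your proof is correct and follows essentially the same route as the paper's: decompose the composite modification into its one-step boundary shifts $\varepsilon_k/m^+_k$, verify via the radius bounds in (\ref{eps}) that perturbed intervals keep positive length (for~\ref{it:xi-v-4} and~\ref{it:xi-v-0}), and use the recurrence $\varepsilon_i=\varepsilon_{i-1}m_i/m^+_{i-1}$ to show the per-piece gains and losses cancel except for the $\varepsilon$ absorbed by $j_0$ (for~\ref{it:xi-v-1} and~\ref{it:xi-v-2}). The $\Delta_k$ notation and the identity $m^+_k\Delta_k=m_k\Delta_{k-1}$ merely repackage the paper's computation, with somewhat more explicit bookkeeping for jobs appearing in several modified intervals.
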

\begin{proof} Assume $\varepsilon_0>0$ in the proof, the proof for $\varepsilon_0<0$ is similar and thus will be omitted.
The ends of the interval $I_i=(s_i,e_i)$ change to $(s'_i,e'_i)=I'_i$ as a result of the $q+1$ step modification as follows:
\begin{equation}\label{ends}
s'_i=s_i + \frac{\varepsilon_{i-1}}{m^+_{i-1}} \text{ and } e'_i=e_i + \frac{\varepsilon_{i}}{m^+_{i}}
\end{equation}
for $i=1,\ldots,q$. Thus
\begin{equation}
e'_i-s'_i=e_i-s_i+\frac{\varepsilon_{i-1}}{m^+_{i-1}}(\frac{m_i}{m^+_{i}}-1).
\end{equation}
For $m^+_i=m_i$, we have $e'_i-s'_i=e_i-s_i>0$. For $m^+_i=m_i+1$, we have
\begin{equation}\label{E}
e'_i-s'_i=e_i-s_i-\frac{\varepsilon_{i-1}}{m^+_{i-1}m^+_{i}}.
\end{equation}
Since $\varepsilon>\frac{\varepsilon_{i-1}}{m^+_{i-1}}$, and by (\ref{eps}) $m^+_i(e_i-s_i)>\varepsilon$, we have $e'_i-s'_i>0$ for $\varepsilon>0$. Thus, by the one step modifications, \ref{it:xi-v-4} holds.

By (\ref{ends}) the execution of $j_i$ is reduced (this does not happen for $j_0$ for which the reduction is 0) by
\begin{equation}
\varepsilon_{i-1}\frac{m_i}{m^+_{i-1}}=\varepsilon_i,
\end{equation}
and it increases by
\begin{equation}
\varepsilon_i\frac{m_i}{m^+_{i}}
\end{equation}
on shared processors.
For $m^+_i=m_i$ the two are equal, and for $m^+_i=m_i+1$, the private processor $\Mpriv_{j_0}$ of job $j_0$ gets $\varepsilon_{i}\frac{1}{m^+_{i}}$ of that job. Thus \ref{it:xi-v-1} and \ref{it:xi-v-2} hold.

In Base Step, the job $j_{\ell}$ is synchronized due to~\eqref{eq:trans1:base2}.
Similarly, in Main Step~II, also $j_i$ completes both on shared processor and on its private processor at the same time according to~\eqref{eq:trans1:CaseII3}.
In Main Case~I, the completion time of $j$ is set in~\eqref{eq:trans1:CaseI2} and this does not exceed $\complTime{\cS'}{j}{\Mpriv}$ by definition of $r_i$ in the right hand side inequality in~\eqref{eq:trans1:CaseI1}.
For all remaining jobs their completion times on all processors remain unchanged, which proves~\ref{it:xi-v-0}.
\end{proof}

The second lemma shows a sufficient condition for $\varepsilon$ to make $q+1$ step modification that starts with $\cS_0=\cS$, $t=t_0$ and  $\varepsilon_0=\varepsilon$
doable.
Recall that $\cS'$ produced by the $q+1$ step modification is not feasible however, by Lemma~\ref{lem:xi-valid}, the only reason for that is that the total execution time of the job $j_0$ is incorrect in $\cS'$, i.e., it equals $\p{j_0}+\varepsilon$ instead of $\p{j_0}$.
For this reason we introduce notation $\cS'_{-j}$, for a job $j$, to denote a schedule obtained from $\cS'$ by removing all pieces of $j$ from shared processors and by removing the private processor of $j$.
Note that $\cS'_{-j}$ is then a feasible schedule for the instance $\jobs\setminus\{j\}$.
Hence, the second lemma also shows the difference between the total weighted overlap $\tct{\cS'_{-j}}$ of $\cS'_{-j}$, which gives the sum of total overlaps of all jobs in $\cS'$ except of $j$,
and the total weighted overlap $\tct{\cS}$ of $\cS$. 
\begin{lemma} \label{lem:xi-epsilon}
Let $\cS$ be a processor-descending sequential schedule. Let $\varepsilon$ meet (\ref{eps}).
Then, both $q+1$ step modification that starts with $\cS_0=\cS$, $t=t_0$ and  $\varepsilon_0=-\varepsilon$  and $q+1$ step modification that starts with $\cS_0=\cS$, $t=t_0$ and or  $\varepsilon_0=\varepsilon$ are doable, and we have
  \begin{equation} \label{eq:xi-e-4}
    \tct{\cS'_{-j_0}}=\tct{\cS}+ \Delta(\cS_{0},t_{0},\varepsilon_0)
    -\sum_{i=1}^m\overlap{\cS}{j_0,\Mshared_i}(\w{j_0}-\cm{i}) - \frac{\varepsilon}{m_{0}^+}\sum_{z=1}^{m_0}(\w{j_0}-\cm{z}),
  \end{equation}
for the resulting schedule $\cS'$.
\end{lemma}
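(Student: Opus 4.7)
Plan: I would split the lemma into its two assertions: doability of both $q+1$-step modifications (with $\varepsilon_0=\pm\varepsilon$), and the closed-form identity~\eqref{eq:xi-e-4} for $\tct{\cS'_{-j_0}}$. I tackle them in turn.

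For doability, I would invoke Corollary~\ref{cor:epsilon}, which writes each shift as $\varepsilon_i=\varepsilon_0\prod_{z=1}^{i} m_z/m_{z-1}^+$. The key structural input is that $\cS$ is processor-descending, so the widths of its last intervals form a non-increasing sequence $m_0\geq m_1\geq\cdots\geq m_q$, and since $m_{z-1}^+\in\{m_{z-1},m_{z-1}+1\}$, every factor in the product is at most $1$. Hence $|\varepsilon_i|\leq|\varepsilon|$ for each $i$. Assumption~\eqref{eps} gives $|\varepsilon|<m_ir_i/2\leq m_i^+ r_i$, so $|\varepsilon_i|<m_i^+ r_i$, which is exactly the one-step condition imposed by~\eqref{eq:trans1:base1}, \eqref{eq:trans1:CaseI1} or~\eqref{eq:trans1:CaseII2}, depending on whether the Base Step, Main Step~I or Main Step~II applies at $t_i$. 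Since the sign of $\varepsilon_0$ does not affect the bound, both cases $\varepsilon_0=\pm\varepsilon$ are doable.

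For the identity~\eqref{eq:xi-e-4}, I argue in two halves. First, each one-step payoff in~\eqref{eq:trans1:base3}, \eqref{eq:trans1:CaseI3} and~\eqref{eq:trans1:CaseII4} was engineered to record precisely the change in total weighted overlap produced by shifting a single boundary: the piece of $j_k$ ending at $t_k$ gains or loses $\varepsilon_k/m_k^+$ length on each of $m_k$ shared processors, and the piece of $j_{k+1}$ beginning at $t_k$ loses or gains the same amount on each of $m_{k+1}$ shared processors, with both pieces remaining inside their private-processor windows thanks to the doability bound. Summing the one-step payoffs therefore gives $\tct{\cS'}=\tct{\cS}+\Delta(\cS_0,t_0,\varepsilon_0)$. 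Second, $\cS'_{-j_0}$ is obtained from $\cS'$ by deleting every piece of $j_0$ and the private processor $\Mpriv_{j_0}$, so I subtract from $\tct{\cS'}$ the total weighted overlap of $j_0$ in $\cS'$. The latter equals the overlap of $j_0$ in $\cS$, which is $\sum_{i=1}^m\overlap{\cS}{j_0,\Mshared_i}(\w{j_0}-\cm{i})$, augmented by the $\varepsilon/m_0^+$ of extra length added to $j_0$'s piece in $I_0$ at the very first modification on each of the $m_0$ shared processors there, whose weighted contribution is $(\varepsilon/m_0^+)\sum_{z=1}^{m_0}(\w{j_0}-\cm{z})$. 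Substituting yields the right-hand side of~\eqref{eq:xi-e-4}.

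The main obstacle is verifying that the initial modification at $t_0$ is the only source of change in $j_0$'s weighted overlap, i.e., that the subsequent modifications at $t_1,\ldots,t_q$ do not touch $j_0$ beyond what is already recorded in $\Delta$. This is precisely the purpose of the telescoping rule $\varepsilon_{i+1}=\varepsilon_i m_{i+1}/m_i^+$ built into $\nextEps{\cdot}$: at every later step $i\geq 1$, the length gained on one side of the moving boundary cancels the length lost on the other, which is what forces Lemma~\ref{lem:xi-valid}(i) and~(iii) to hold and in turn keeps the $j_0$-attributable correction in~\eqref{eq:xi-e-4} confined to the single boundary term at $t_0$.
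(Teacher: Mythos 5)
Your treatment of the identity~\eqref{eq:xi-e-4} is essentially the paper's argument: the paper sums the per-job overlap changes $\left(\varepsilon_i/m_i^+ - \varepsilon_{i-1}/m_{i-1}^+\right)\sum_{z=1}^{m_i}(\w{j_i}-\cm{z})$ over $j_1,\ldots,j_q$ and invokes Corollary~\ref{cor:payoff}; this is the same bookkeeping as your ``compute $\tct{\cS'}=\tct{\cS}+\Delta$, then subtract $j_0$'s contribution'' decomposition, and your identification of the extra $(\varepsilon/m_0^+)\sum_{z=1}^{m_0}(\w{j_0}-\cm{z})$ term is correct.

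The gap is in the doability half. You verify the one-step conditions \eqref{eq:trans1:base1}, \eqref{eq:trans1:CaseI1}, \eqref{eq:trans1:CaseII2} against the radii $r_i$ of the \emph{original} schedule $\cS$, via $|\varepsilon_i|\leq|\varepsilon|<m_ir_i/2\leq m_i^+r_i$. But the $i$-th one-step modification is applied to $\cS_{i-1}$, not to $\cS$: by the time the boundary at $t_i$ is moved, the left endpoint of the interval ending at $t_i$ has already been shifted by $\varepsilon_{i-1}/m_{i-1}^+$ in the preceding step, so the relevant interval length (and hence the radius) in $\cS_{i-1}$ can be strictly smaller than $r_i$. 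Your chain of inequalities discards the factor $1/2$ in \eqref{eps} immediately --- $m_ir_i/2\leq m_i^+r_i$ holds trivially since $m_i^+\geq m_i$ --- yet that factor is precisely the slack needed to absorb the shrinkage caused by earlier steps. This is what the bulk of the paper's proof establishes: it computes the shifted endpoints $s_i'=s_i+\varepsilon_{i-1}/m_{i-1}^+$ and $e_i'=e_i+\varepsilon_i/m_i^+$ and derives $m_i^+(e_i'-s_i')>\varepsilon_i$ and $m_i^+(p_j-e_i')>\varepsilon_i$ from \eqref{eps}, using the relation $\varepsilon_{i-1}/m_{i-1}^+=\varepsilon_i/m_i$. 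Your monotonicity observation $|\varepsilon_i|\leq|\varepsilon|$ (from the non-increasing widths of a processor-descending schedule) is a genuine ingredient of that computation, but as written your argument checks the doability bound in the wrong schedule, so the step ``$|\varepsilon_i|<m_i^+r_i$, which is exactly the one-step condition'' does not close the proof.
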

\begin{proof}
Assume $\varepsilon_0>0$ in the proof, the proof for $\varepsilon_0<0$ is similar and thus will be omitted.
We first prove that
\begin{equation} \label{E1}
m^+_i(e'_i-s'_i)>\varepsilon_i
\end{equation} 
for $i\in\{1,\ldots,q\}$. This holds for $m^+_i=m_i$ since then $e_i-s_i=e'_i-s'_i$ and by $\varepsilon\geq \varepsilon_i$. Suppose $m^+_i=m_i+1$. By (\ref{E}) and (\ref{E1}) we need to show
\begin{equation}
m^+_i(e_i-s_i)> \varepsilon_i+\frac{\varepsilon_{i-1}}{m^+_{i-1}m^+_i}.
\end{equation}
To that end we observe that
\begin{equation}
\frac{\varepsilon_{i-1}m^+_i}{m^+_{i-1}}>\varepsilon_i+\frac{\varepsilon_{i-1}}{m^+_{i-1}m^+_i}=\frac{\varepsilon_{i-1}m_i}{m^+_{i-1}}+\frac{\varepsilon_{i-1}}{m^+_{i-1}m^+_i}.
\end{equation}
Thus it suffices to show that
\begin{equation}
m^+_i(e_i-s_i)> \frac{\varepsilon_{i-1}m^+_i}{m^+_{i-1}},
\end{equation}
or equivalently
\begin{equation}
e_i-s_i> \frac{\varepsilon_{i-1}}{m^+_{i-1}}.
\end{equation}
By multiplying both sides of the last inequality by $m_i$ we get
\begin{equation}
m_i(e_i-s_i)> \frac{\varepsilon_{i-1}m_i}{m^+_{i-1}}=\varepsilon_i.
\end{equation}
This last inequality holds since $\varepsilon\geq \varepsilon_i$ and (\ref{eps}) holds for $\varepsilon$. We also prove, a similar proof for $m^+_i\complTime{\cS}{j}{\Mpriv}-e'_{i}>\varepsilon_i$ will be omitted, that
\begin{equation} \label{E10}
m^+_i(p_j-e'_i)>\varepsilon_i.
\end{equation} 
Since $m^+_i\geq m_i$, it suffices to show that
\begin{equation}
m_i(p_j-e_i)>\varepsilon_i+\varepsilon_i\frac{m_i}{m^+_i}.
\end{equation}
The last inequality holds since $m_i(p_j-e_i)\geq 2(m_i r_i/2)>2\varepsilon\geq \varepsilon_i$ by (\ref{eps}). This completes the proof of the first part of the lemma. Observe that $\varepsilon_i$ does not reach neither end of doable intervals.

For the proof of the second part, let for brevity $\jobs'=\{j_0,j_1,\ldots,j_q\}$.
We have
\begin{align} \label{eq:payoff-barj1}
\begin{split}
\tct{\cS'_{-j_0}} & = \sum_{j\in\jobs\setminus\{j_0\}}\sum_{i=1}^m \overlap{\cS'_{-j_0}}{j,\Mshared_i}(\w{j}-\cm{i}) \\
            & = \sum_{j\in\jobs\setminus\jobs'}\sum_{i=1}^m \overlap{\cS}{j,\Mshared_i}(\w{j}-\cm{i}) + \sum_{j\in\jobs'\setminus\{j_0\}}\sum_{i=1}^m \overlap{\cS'_{-j_0}}{j,\Mshared_i}(\w{j}-\cm{i}).
\end{split}
\end{align}
Consider any $j_i\in\jobs'\setminus\{j_0\}$.
The total weighted overlap of $j_i$ is the same in $\cS_{-j_0}$ as in $\cS$ except for the shift in its piece performed by the $i$-th and $(i-1)$-st modifications (see also Corollary~\ref{cor:payoff}):
\begin{align} \label{eq:payoff-barj3}
\begin{split}
\sum_{z=1}^m \overlap{\cS_{-j_0}}{j_i,\Mshared_z} (\w{j_i}-\cm{z}) = & \sum_{z=1}^m \overlap{\cS}{j_i,\Mshared_z} (\w{j_i}-\cm{z}) \\
              & + \left(\frac{\varepsilon_i}{m_i^+} - \frac{\varepsilon_{i-1}}{m_{i-1}^+}\right)\sum_{z=1}^{m_i} (\w{j_i}-\cm{z}).
\end{split}
\end{align}
By~\eqref{eq:payoff-barj1} and~\eqref{eq:payoff-barj3} applied to all jobs in $\jobs'\setminus\{j_0\}=\{j_1,\ldots,j_q\}$ we obtain
\begin{align} \label{eq:payoff-barj2}
\begin{split}
\tct{\cS'_{-j_0}} = \tct{\cS} & - \sum_{i=1}^m \overlap{\cS}{j_0,\Mshared_i}(\w{j_0}-\cm{i}) \\
                                  & + \sum_{i=1}^{q}\left(\frac{\varepsilon_i}{m_i^+} - \frac{\varepsilon_{i-1}}{m_{i-1}^+}\right)\sum_{z=1}^{m_i} (\w{j_i}-\cm{z}).
\end{split}
\end{align}
By~Corollary~\ref{cor:payoff},~\eqref{eq:payoff-barj2} and $\varepsilon_0=\varepsilon$,
\begin{align} \label{eq:payoff-barj5}
\begin{split}
\tct{\cS'_{-j_0}} = \tct{\cS} & - \sum_{i=1}^m \overlap{\cS}{j_0,\Mshared_i}(\w{j_0}-\cm{i}) \\
                                  & + \Delta(\cS_{0},t_{0},\varepsilon_0)
                                  - \frac{\varepsilon}{m_{0}^+}\sum_{z=1}^{m_0}(\w{j_0}-\cm{z}).
\end{split}
\end{align}
which proves~\eqref{eq:xi-e-4} and completes the proof of the lemma.
\end{proof}

\subsection{Splits} \label{sec:split}

Suppose that $\cS$ is a  processor-descending and sequential schedule.
We say that a job $j$ has a $(I,I')$-\emph{split} if $I$ and $I'$ are two pieces of $j$ executing in two different segments.
We assume that $I'$ is to the right of $I$.
Given that $\cS$ has such a job $j$ with a $(I,I')$-split, we introduce the following schedule transformation that we call a $(I,I',\varepsilon)$-\emph{transfer}.
Although this modification works for an arbitrary split, we will be particularly interested in our analysis in the case when the $(I,I')$-split is the rightmost.
Let $q\geq 0$ be the number of intervals (job pieces) to the right of $I'$ in $\cS$. Consider $\varepsilon$ such that
\begin{equation} \label{epsA}
0<|\varepsilon|< \min\left\{|I|,  \frac{|I'|}{1+1/m_0^+}, \min_{i=0,\ldots,q}\{m_ir_i/2\}\right\}
\end{equation}
where $m^+_0$ is the factor of interval $I'=(s_0,e_0)$. Let $m_0$ be the width of $I'$.
The modification is composed of the following steps.
\begin{enumerate} [label={\normalfont{(T\arabic*)}}]
 \item\label{it:transfer1} Obtain a schedule $\cS'$ by performing $q+1$ step modification with $\cS$, $t=e_0$, and $\varepsilon$.
 \item\label{it:transfer2a} If $\varepsilon>0$, then change in $\cS'$ the completion time of the piece in $I$ of the job $j$ from $y$ to $y-\varepsilon$ on the processor $\Mshared_{m_0+1}$, where $y$ is the right endpoint of $I$.
 \item\label{it:transfer2b} If $\varepsilon<0$, then add a piece of the job $j$ of length $\card{\varepsilon}$ to the shared processor $\Mshared_{m_0+1}$ in time interval $(s_0,s_0+\card{\varepsilon})$. 
 \item\label{it:transfer3} Call $\procSequential(\cS')$ to make each segment of the new schedule sequential, and return $\cS'$.
\end{enumerate}

The $(I,I')$-transfer is illustrated in Figure~\ref{fig:split}.
\begin{figure}[ht!]
\begin{center}
 \includegraphics[scale=1.1]{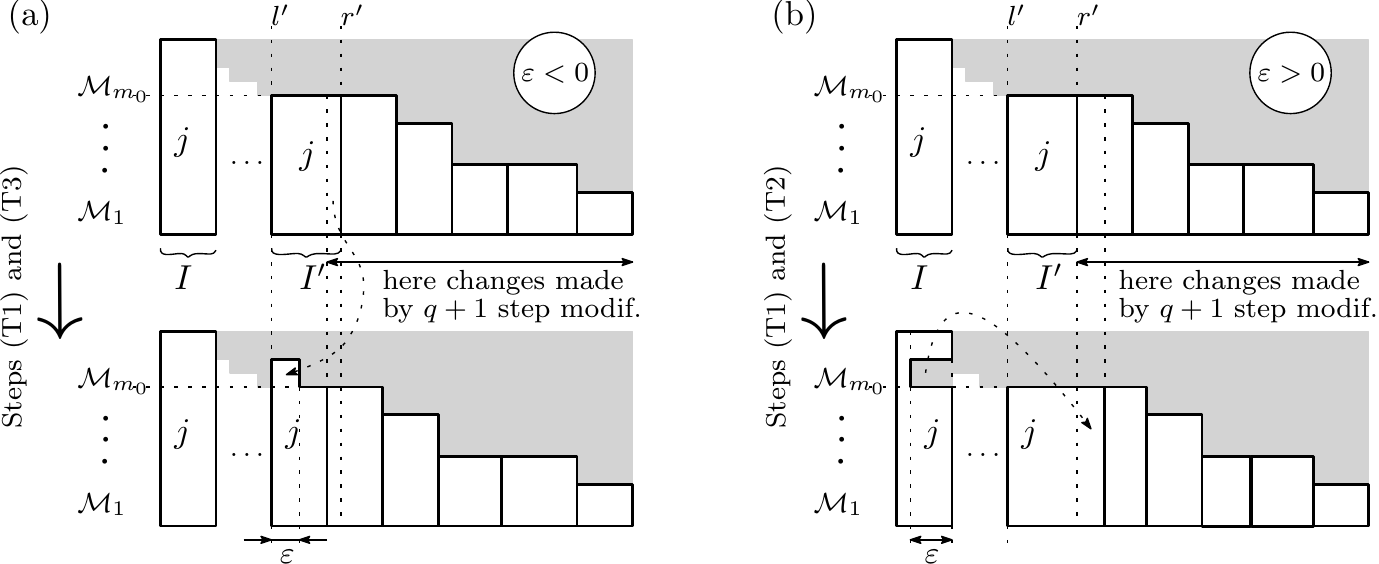}
\end{center}
\caption{A $(I,I',\varepsilon)$-transfer:
        (a) if $\varepsilon<0$, then Steps~\ref{it:transfer1} and~\ref{it:transfer2b} modify the schedule shown on top to the one on bottom;
        (b) if $\varepsilon>0$, then Steps~\ref{it:transfer1} and~\ref{it:transfer2a} are applied.}
\label{fig:split}
\end{figure}

\begin{lemma} \label{lem:transfer}
If $\cS$ is an optimal processor-descending and sequential schedule with a job $j$ having the right-most $(I,I')$-split, then both
the $q+1$ step modification that starts with $\cS_0=\cS$, $t_0=e_0$, $\varepsilon_0=\varepsilon$, and the $q+1$ step modification that starts with $\cS_0=\cS$, $t_0=e_0$, $\varepsilon_0=-\varepsilon$, where $\varepsilon$ satisfies condition (\ref{epsA}), are doable and produce a processor-descending and sequential schedule $\cS'$. Moreover, for $\varepsilon<0$ we have $\cS'$ shorter than $\cS$.
\end{lemma}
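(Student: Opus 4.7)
The plan is to decompose the $(I,I',\varepsilon)$-transfer into its three constituent steps---the $q+1$ step modification~\ref{it:transfer1} (T1), the local correction on $\Mshared_{m_0+1}$ in~\ref{it:transfer2a} or~\ref{it:transfer2b} (T2), and the cleanup call to $\procSequential$ in~\ref{it:transfer3} (T3)---and verify each piece separately. The first observation is that condition~\eqref{epsA} is strictly tighter than condition~\eqref{eps}, because the outer $\min$ in~\eqref{epsA} already contains the expression $\min_{i=0,\ldots,q}\{m_ir_i/2\}$ as one of its arguments. Lemma~\ref{lem:xi-epsilon} therefore applies with either $\varepsilon_0=\varepsilon$ or $\varepsilon_0=-\varepsilon$, which establishes doability of T1 in both cases. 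The intermediate schedule $\cS_1$ produced by T1 is described by Lemma~\ref{lem:xi-valid}: no two pieces overlap, no piece extends past its private-processor completion time, every job other than the job $j_0=j$ of $I'=(s_0,e_0)$ still has total execution time $p_{j'}$, and the total execution time of $j$ equals $p_j+\varepsilon$.

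Next I would argue that T2 restores feasibility by adjusting the total execution of $j$ by exactly $-\varepsilon$. For $\varepsilon>0$, T2a shortens the piece of $j$ inside $I$ by $\varepsilon$ on $\Mshared_{m_0+1}$; this needs $|I|>\varepsilon$, supplied by~\eqref{epsA}, together with the fact that $\Mshared_{m_0+1}$ is busy with $j$ in $I$, which follows from the processor-descending structure of $\cS$ and the choice of the right-most split (so the earlier segment uses at least $m_0+1$ shared processors). For $\varepsilon<0$, T2b inserts a piece of length $|\varepsilon|$ on $\Mshared_{m_0+1}$ in $(s_0,s_0+|\varepsilon|)$; here the bound $|\varepsilon|<|I'|/(1+1/m_0^+)$ in~\eqref{epsA} is tuned so that after T1 shifts the endpoint of $I'$ by $\varepsilon/m_0^+$, the inserted piece at the left end of $I'$ does not collide with what remains of the original piece on $\Mshared_{m_0+1}$ further inside $I'$. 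The schedule after T2 may fail to be sequential inside the segment of $I'$ because the piece of $j$ on $\Mshared_{m_0+1}$ now has a different length from the pieces that share its segment, so T3 invokes $\procSequential$; by Lemma~\ref{lem:segments} the output $\cS'$ is processor-descending and sequential and the total weighted overlap does not decrease.

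For the final claim, consider $\varepsilon<0$. Corollary~\ref{cor:epsilon} gives $\varepsilon_i=\varepsilon\prod_{z=1}^{i}m_z/m_{z-1}^+<0$ for every $i\in\{0,\ldots,q\}$. The last endpoint of $\cS$, which is $e_q$, is therefore shifted by T1 to $e_q+\varepsilon_q/m_q^+<e_q$; T2b inserts material only strictly to the left of $I'$, and $\procSequential$ never pushes the schedule past the rightmost job piece of its input. Thus $\cS'$ is strictly shorter than $\cS$. The main technical obstacle I anticipate is the geometric bookkeeping in T2: confirming that $\Mshared_{m_0+1}$ is indeed free in the sub-interval where the insertion or shortening happens, and that~\eqref{L2} and~\eqref{L3} continue to hold after the endpoint shifts introduced by T1. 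The three clauses of~\eqref{epsA} are precisely calibrated to discharge these three obligations, and the optimality of $\cS$ enters only to rule out the degenerate case in which the segment of $I$ has the same width as that of $I'$ (in such a case the part of $j$ in $I$ could be absorbed into a single segment without loss, contradicting that the split is present in an optimal schedule), so the verification reduces to matching each clause of~\eqref{epsA} with the corresponding collision-avoidance constraint.
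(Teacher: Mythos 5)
Your proposal is correct and follows essentially the same route as the paper: observe that~\eqref{epsA} implies~\eqref{eps} so Lemma~\ref{lem:xi-epsilon} yields doability of Step~\ref{it:transfer1} for both signs of $\varepsilon$, match the clauses $|\varepsilon|<|I|$ and $|\varepsilon|<|I'|/(1+1/m_0^+)$ to the feasibility of Steps~\ref{it:transfer2a} and~\ref{it:transfer2b}, invoke Lemma~\ref{lem:segments} for Step~\ref{it:transfer3}, and derive the shortening for $\varepsilon<0$ from the Base Step. Your account is in fact more detailed than the paper's (e.g.\ checking that $\Mshared_{m_0+1}$ carries a piece of $j$ in $I$, which follows from the strictly decreasing segment widths rather than from optimality); the only slight imprecision is in the $\varepsilon<0$ case, where the bound $|\varepsilon|(1+1/m_0^+)\leq|I'|$ keeps the inserted piece inside the shrunken interval $I'$ rather than avoiding a pre-existing piece on $\Mshared_{m_0+1}$, which is idle there.
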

\begin{proof}
In Step~\ref{it:transfer2a}, $\varepsilon$ must not exceed $\card{I}$ as this is the length of the piece of $j$ executing in the interval $I$.
By~\eqref{eq:trans1:base2},~\eqref{eq:trans1:CaseI2} and~\eqref{eq:trans1:CaseII3} in Step~\ref{it:transfer2b}, we need $\card{\varepsilon}\leq |I'|-\card{\varepsilon}/m_1^+$.
Thus, we obtain a condition
\[\card{\varepsilon}\leq \frac{|I'|}{1+1/m_0^+}.\]

By Lemma \ref{lem:xi-epsilon} both $q+1$ step modifications are doable since (\ref{epsA}) implies (\ref{eps}). Moreover, for $\varepsilon<0$ we have $\cS'$ shorter that $\cS$ due to Base Step of the modification.
\end{proof}

\begin{lemma} \label{lem:rate}
Suppose that $\cS$ is a processor-descending and sequential schedule with a job $j$ having the right-most $(I,I')$-split and consider any $\varepsilon$ that meets (\ref{epsA}).
Then, the processor-descending and sequential schedule $\cS'$ resulting from the $(I,I',\varepsilon)$-transfer satisfies
\[\tct{\cS'} = \tct{\cS} + \varepsilon\cdot (R(\cS,t)-\w{j}+\cm{m_0+1}),\]
where $m_0$ is the width of the interval $I'=(s',e'=t)$ in $\cS$. Moreover, for an optimal $\cS$, 
\[R(\cS,t)-\w{j}+\cm{m_0+1}=0.\]
\end{lemma}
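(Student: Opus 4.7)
The plan is to decompose
\[\tct{\cS'} \;=\; \tct{\cS'_{-j}} + \sum_{i=1}^m \overlap{\cS'}{j,\Mshared_i}(\w{j}-\cm{i}),\]
read off the first summand from Lemma~\ref{lem:xi-epsilon} (combined with Corollary~\ref{cor:rate}) applied to the $q+1$-step modification triggered by (T1), and compute the second summand by tracking the pieces of $j$ through (T1) and (T2). Since Step~(T3) is an application of $\procSequential$, Lemma~\ref{lem:segments} guarantees that it preserves the total weighted overlap, so it suffices to analyse $\cS'$ immediately after~(T2); in particular, $\tct{\cS'_{-j}}$ is the same before and after~(T3) because (T2) only modifies pieces of $j$.

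Setting $(\cS_0,t_0,\varepsilon_0)=(\cS,e_0,\varepsilon)$ so that $j_0=j$ (the job whose rightmost piece ends at $e_0$), Lemma~\ref{lem:xi-epsilon} and Corollary~\ref{cor:rate} give
\[\tct{\cS'_{-j}} \;=\; \tct{\cS} + \varepsilon\, R(\cS,t) - \sum_{i=1}^m\overlap{\cS}{j,\Mshared_i}(\w{j}-\cm{i}) - \frac{\varepsilon}{m_0^+}\sum_{z=1}^{m_0}(\w{j}-\cm{z}).\]
Because $(I,I')$ is the right-most split of $j$, the only step of the $q+1$-step modification that touches a piece of $j$ is the one at $t_0$: in both Main Step~I and Main Step~II the piece of $j$ in $I'$ on each $\Mshared_z$, $z=1,\ldots,m_0$, changes length by $\varepsilon/m_0^+$, and the doability bound in~\eqref{epsA} keeps the new piece inside the (possibly shifted) overlap window $(0,\complTime{\cS'}{j}{\Mpriv})$; all other pieces of $j$ lie strictly below the window's right endpoint, so their overlaps are unaffected. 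Hence (T1) contributes $\frac{\varepsilon}{m_0^+}\sum_{z=1}^{m_0}(\w{j}-\cm{z})$ to $j$'s weighted overlap. Step~(T2) then alters $j$'s piece on $\Mshared_{m_0+1}$ in the segment containing $I$ by $-\varepsilon$ (shortening it when $\varepsilon>0$, inserting a new length-$|\varepsilon|$ piece when $\varepsilon<0$); in either case the affected piece is entirely inside $(0,\complTime{\cS'}{j}{\Mpriv})$, so the weighted overlap of $j$ changes by $-\varepsilon(\w{j}-\cm{m_0+1})$. Adding both contributions to $\sum_{i=1}^m\overlap{\cS}{j,\Mshared_i}(\w{j}-\cm{i})$ and inserting into the decomposition, the $-\sum_{i=1}^m\overlap{\cS}{j,\Mshared_i}(\w{j}-\cm{i})$ and $-\frac{\varepsilon}{m_0^+}\sum_{z=1}^{m_0}(\w{j}-\cm{z})$ terms cancel telescopically, yielding
\[\tct{\cS'} \;=\; \tct{\cS} + \varepsilon\bigl(R(\cS,t)-\w{j}+\cm{m_0+1}\bigr).\]

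For the second part, Lemma~\ref{lem:transfer} tells us that for every $\varepsilon$ meeting~\eqref{epsA}, of either sign, the schedule $\cS'$ is feasible and processor-descending sequential, hence a legitimate competitor to $\cS$. Optimality of $\cS$ forces $\tct{\cS'}\leq\tct{\cS}$, i.e.\ $\varepsilon(R(\cS,t)-\w{j}+\cm{m_0+1})\leq 0$, for all sufficiently small $\varepsilon$ of either sign; this is only possible when $R(\cS,t)-\w{j}+\cm{m_0+1}=0$. The main obstacle I foresee is the bookkeeping in paragraph two: confirming, uniformly across Main Step~I, Main Step~II and the two sub-cases $\varepsilon>0$ and $\varepsilon<0$, that both the extended endpoint of the piece in $I'$ and the endpoint of the piece on $\Mshared_{m_0+1}$ affected by~(T2) stay inside the (possibly shifted) overlap window, so that the contributions are exactly $\frac{\varepsilon}{m_0^+}\sum_{z=1}^{m_0}(\w{j}-\cm{z})$ and $-\varepsilon(\w{j}-\cm{m_0+1})$ and the telescoping cancellation goes through as claimed.
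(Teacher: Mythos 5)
Your proposal is correct and follows essentially the same route as the paper: the same decomposition $\tct{\cS'}=\tct{\cS'_{-j}}+\sum_i\overlap{\cS'}{j,\Mshared_i}(\w{j}-\cm{i})$, the same invocation of Lemma~\ref{lem:xi-epsilon} and Corollary~\ref{cor:rate} for the first summand, the same accounting of $j$'s pieces through (T1) and (T2), and the same two-sided optimality argument for the vanishing of the rate expression. Your explicit remark that (T3) preserves the total weighted overlap via Lemma~\ref{lem:segments} is a small point the paper leaves implicit, but it changes nothing substantive.
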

\begin{proof}
Consider first an arbitrary $\varepsilon$ that meets (\ref{epsA}). By Lemma~\ref{lem:transfer}, the $(I,I',\varepsilon)$-transfer is doable.
Let $j$ be the job with $(I,I')$-split modified in Step~\ref{it:transfer2a} of Procedure~$\procTransfer$.
We first calculate the sum of $\overlap{\cS'}{j,\Mshared_i}$ taken over all shared processors $\Mshared_i$.
This value is similar to that in $\cS$, except for the two changes introduced to $j$ in Steps~\ref{it:transfer1}, \ref{it:transfer2a} and~\ref{it:transfer2b} of Procedure~$\procTransfer$.
In Steps~\ref{it:transfer2a} and~\ref{it:transfer2b}, the total weighted overlap of $j$ changes by
\[-\varepsilon(\w{j}-\cm{m_0+1}).\]
In Step~\ref{it:transfer1}, it changes by
\[\frac{\varepsilon}{m_0^+} \sum_{z=1}^{m_0}(\w{j}-\cm{z}),\]
where $m_0$ and $m_0^+$ are the width and the factor of $I'$. 
Indeed, this follows from the fact that the transformation changes only the right endpoint of the piece of $j$ executing in $I'=(s',e')$, and due to~\eqref{eq:trans1:base2},~\eqref{eq:trans1:CaseI2} and~\eqref{eq:trans1:CaseII3} this value changes, on each machine $\Mshared_1,\ldots,\Mshared_{m_1}$, by $\varepsilon/m_1^+$ since this is done in the first step of the $q+1$ step modification that starts with $\cS_0=\cS$, $t_0=e'$, and $\varepsilon_0=\varepsilon$.
Thus we obtain
\begin{align} \label{eq:transfer-j}
\begin{split}
\sum_{z=1}^{m}\overlap{\cS'}{j,\Mshared_z}(\w{j}-\cm{z}) = & \sum_{z=1}^{m}\overlap{\cS}{j,\Mshared_z}(\w{j}-\cm{z}) \\ 
                                                       & - \varepsilon(\w{j}-\cm{m_0+1}) + \frac{\varepsilon}{m_0^+} \sum_{z=1}^{m_0}(\w{j}-\cm{z}).
\end{split}
\end{align}
The total weighted overlap of $\cS'$ can be expressed as
\[\tct{\cS'} = \tct{\cS'_{-j}} + \sum_{z=1}^{m}\overlap{\cS'}{j,\Mshared_z}(\w{j}-\cm{z}).\]
By Lemma~\ref{lem:xi-epsilon} (where $j_0$ is taken to be $j$) and~\eqref{eq:transfer-j},

\[\tct{\cS'} = \tct{\cS}+
\Delta(\cS,t_0,\varepsilon) - \varepsilon(\w{j}-\cm{m'+1}).\]
By Corollary~\ref{cor:rate},
\[\tct{\cS'} = \tct{\cS} + \varepsilon\cdot (R(\cS,t_0)-\w{j}+\cm{m'+1}).\]

Note that the value of the expression $R(\cS,t_0)-\w{j}+\cm{m_0+1}$ depends only on the schedule $\cS$ and the point $e'$, where $I'=(s',e'=t_0)$.
If this value is negative, then by Lemma~\ref{lem:transfer}, $(I,I',\varepsilon<0)$-transfer is doable and results in a feasible schedule $\cS'$, which satisfies by Lemma~\ref{lem:rate}: $\tct{\cS'}>\tct{\cS}$. Thus, a contradiction.
If this value is positive, then again by Lemma~\ref{lem:transfer}, $(I,I',\varepsilon>0)$-transfer is doable and results in a feasible schedule $\cS'$, which satisfies by Lemma~\ref{lem:rate} again the desired inequality: $\tct{\cS'}>\tct{\cS}$. Thus, again a contradiction.
If, however this value equals $0$, then we can arbitrarily perform either $(I,I',\varepsilon>0)$-transfer or $(I,I',\varepsilon<0)$-transfer and Lemmas~\ref{lem:rate} and~\ref{lem:transfer} guarantee that we obtain some schedule $\cS'$ with $\tct{\cS'}=\tct{\cS}$. This proves the lemma.
\end{proof}

\begin{lemma} \label{lem:no-splits}
There exists an optimal schedule that is processor-descending, sequential and has no job splits.
\end{lemma}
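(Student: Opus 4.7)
The plan is to start from an optimal processor-descending and sequential schedule, whose existence is guaranteed by Corollary~\ref{cor:segments}, and then iteratively apply $(I,I',\varepsilon)$-transfers from Subsection~\ref{sec:split} to remove job splits one by one while preserving optimality, processor-descendingness, and sequentiality. Throughout, I maintain as an invariant that the current schedule is optimal, processor-descending, and sequential (the latter two being ensured by the call to $\procSequential$ inside Step~\ref{it:transfer3} of the transfer).

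If the current schedule $\cS$ contains at least one split, then I would pick the \emph{globally rightmost} split $(I,I')$ across all jobs (i.e., the split whose right endpoint $t$ is maximal), belonging to some job $j$. Because $\cS$ is optimal, Lemma~\ref{lem:rate} forces $R(\cS,t) - \w{j} + \cm{m_0+1}=0$, so both signs of $\varepsilon$ give $(I,I',\varepsilon)$-transfers that preserve the total weighted overlap and, by Lemma~\ref{lem:transfer}, produce a processor-descending and sequential schedule $\cS'$; furthermore, for $\varepsilon<0$ the resulting $\cS'$ is strictly shorter than $\cS$. I would push $|\varepsilon|$ to the supremum of the admissible range~\eqref{epsA}. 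At this boundary, one of three structural things happens: the piece in $I$ is completely absorbed (its length was the active bound $|I|$), the piece in $I'$ collapses on processor $\Mshared_{m_0+1}$ after $\procSequential$ reorganizes the segment (the bound $|I'|/(1+1/m_0^+)$ is active), or the radius bound at some interval to the right is active and causes two adjacent pieces of a common job in that interval to merge after $\procSequential$. In each of these cases, the total number of job pieces in $\cS'$ is strictly smaller than in $\cS$, and since the $q+1$ step modification only rigidly shifts the pieces in the suffix of $\cS$ to the right of $t$, no new splits can be created there; likewise the prefix of $\cS$ to the left of $I$ is untouched.

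Termination then follows from a monotonic potential argument: the total number of pieces across all jobs and shared processors is a nonnegative integer, bounded by the (finite) number of (segment, processor) pairs, and it strictly decreases at each iteration; so after finitely many steps we arrive at an optimal, processor-descending, sequential schedule with no splits.

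The main obstacle is the boundary analysis in the extremal step. In particular, when $\varepsilon<0$, Step~\ref{it:transfer2b} inserts a fresh piece of $j$ on processor $\Mshared_{m_0+1}$ inside the segment of $I'$, and a priori this could increase the number of pieces of $j$ in that segment rather than decrease it. The delicate point is to show that when $|\varepsilon|$ is taken at the boundary of~\eqref{epsA} and $\procSequential$ then re-sorts the segment according to Step~\ref{it:s2}, this inserted piece merges with the existing piece of $j$ in the same segment (or the piece in $I$ vanishes), so the total piece count strictly drops; this, together with the rightmost-split choice ensuring no new splits are spawned elsewhere, is what makes the induction close.
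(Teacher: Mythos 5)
Your overall machinery is the right one --- start from Corollary~\ref{cor:segments}, locate the rightmost split, and invoke Lemma~\ref{lem:rate} to conclude that the rate term vanishes at an optimum so that transfers of either sign preserve $\tct{\cS}$ --- but the way you close the argument has a genuine gap. Your termination relies on the claim that, taking $|\varepsilon|$ to the boundary of~\eqref{epsA}, the total number of job pieces strictly decreases. First, \eqref{epsA} is an open condition ($0<|\varepsilon|<\min\{\ldots\}$), and Lemmas~\ref{lem:transfer} and~\ref{lem:rate} are only established for $\varepsilon$ strictly inside that range; the supremum is not attained, so "the extremal step" is not actually a single application of the transfer but a limit that you would still have to justify. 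Second, and more seriously, when the binding constraint is the radius term $\min_{i}\{m_i r_i/2\}$, the built-in factor of $2$ means that even at the supremum no interval degenerates and nothing merges; moreover the intervals to the right of the rightmost split each carry a distinct unsplit job, so there are no "two adjacent pieces of a common job" there to merge. In that case your potential does not drop and the induction does not close. You flagged the $\varepsilon<0$ insertion in Step~\ref{it:transfer2b} as the delicate point, but the radius-bound case is the one that actually breaks the piece-counting argument.

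The paper sidesteps all of this with an extremal choice rather than an iteration: take $\cS$ optimal, processor-descending and sequential \emph{of minimum makespan} among all such schedules. If a rightmost split $(I,I')$ existed, then by Lemma~\ref{lem:rate} the $(I,I',\varepsilon<0)$-transfer (for any small admissible $|\varepsilon|$) yields another optimal processor-descending sequential schedule, and by Lemma~\ref{lem:transfer} it is strictly shorter --- contradicting minimality of the makespan. This needs only an arbitrarily small $\varepsilon$, no boundary analysis, and no piece-count potential. If you want to keep your constructive, iterative flavor, you would need a different monotone quantity (the makespan itself is the natural candidate, but then you must argue the infimum is attained, which is exactly the compactness-style step the paper's "minimum makespan" assumption encapsulates).
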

\begin{proof}
Consider an optimal schedule $\cS$ that is processor-descending and sequential. Without loss of generality we may assume that $\cS$ has the minimum makespan among all optimal processor-descending and sequential schedules. Suppose for a contradiction that $(I,I')$ is the rightmost split in $\cS$. By Lemma~\ref{lem:rate}, the processor-descending and sequential schedule $\cS'$ resulting from the $(I,I',\varepsilon<0)$-transfer satisfies $\tct{\cS'} = \tct{\cS}$. However, since $\varepsilon<0$, $\cS'$ is shorter than $\cS$ which contradicts our choice of $\cS$.
\end{proof}

\subsection{Synchronization} \label{sec:synchronization}

Consider a processor-descending and sequential schedule $\cS$ that has no splits and let $j$ be the \emph{last} job in $\cS$ that is not synchronized, i.e., the job that has the greatest completion time on shared processors among jobs that are not synchronized.
Suppose that $j$ is present on $m_0\geq 1$ shared processors.
Since $\cS$ is sequential, $j$ starts and ends on $\Mshared_1,\ldots,\Mshared_{m_0}$ at time points $s$ and $e$, respectively. Let $q$ be the number of intervals to the right of the interval $I=(s,e)$ in which the piece of $j$ executes.
Define
\begin{equation} \label{epsB}
0<|\varepsilon|< \min\left\{ m_0(e-s), \frac{m_0}{m_0+1}\left(\complTime{\cS}{j}{\Mpriv}-e\right), \min_{i=0,\ldots,q}\{m_ir_i/2\} \right\}
\end{equation}
The following operation that we call a \emph{$j$-synchronization}, performs a transition from $\cS$ to a schedule $\cS'$.
\begin{enumerate} [label={\normalfont{(S\arabic*)}}]
 \item\label{synchronization:eps} 
 If $R(\cS,t=e)>0$, then let $\varepsilon>0$ and otherwise let $\varepsilon<0$, where $\varepsilon$ satisfies (\ref{epsB}).
 \item\label{synchronization:shift} Perform $q+1$ step modification that starts with $\cS$, $e$, and $\varepsilon$.
 \item\label{synchronization:j} Obtain $\cS'$ by setting the completion time of $j$ on the private processor to $\complTime{\cS'}{j}{\Mpriv}:=\complTime{\cS}{j}{\Mpriv}-\varepsilon$.
\end{enumerate}

\begin{lemma} \label{Rzero}
Suppose $\cS$ is an optimal processor-descending and sequential schedule with no splits, and with job $j$ which is not synchronized and done in $I=(s,e)$ on shared processors. Then
$R(\cS,t=e)=0$
\end{lemma}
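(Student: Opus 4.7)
The plan is to apply the $j$-synchronization operation defined just before the lemma and compare $\tct{\cS'}$ with $\tct{\cS}$ using the accounting already developed in Subsection \ref{sec:xi}. Since $\cS$ is assumed optimal, any choice of $\varepsilon$ satisfying~(\ref{epsB}) that yields $\tct{\cS'}>\tct{\cS}$ will give a contradiction; the proof of the lemma is then the derivation of the identity
\[
\tct{\cS'}=\tct{\cS}+\varepsilon\cdot R(\cS,e),
\]
from which $R(\cS,e)=0$ follows exactly as in the concluding argument of Lemma~\ref{lem:rate}: if $R(\cS,e)>0$ pick $\varepsilon>0$, and if $R(\cS,e)<0$ pick $\varepsilon<0$.

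First I would verify that $\cS'$ is feasible. Since $j$ is not synchronized, we have $e<\complTime{\cS}{j}{\Mpriv}$, which places $t_0=e$ under Main Step~I and hence $m_0^+=m_0$. Lemma~\ref{lem:xi-valid} applied to the $q+1$ step modification starting with $\cS$, $t_0=e$, $\varepsilon_0=\varepsilon$ produces a schedule in which the only violation of feasibility is that the total execution time of $j_0=j$ has become $p_j+\varepsilon$. Step~\ref{synchronization:j} removes exactly $\varepsilon$ from the private processor of $j$, restoring the processing time. The resulting completion of $j$ on its private processor is $\complTime{\cS}{j}{\Mpriv}-\varepsilon$, while its shared-processor pieces now end at $e+\varepsilon/m_0$; the bound $|\varepsilon|<\tfrac{m_0}{m_0+1}(\complTime{\cS}{j}{\Mpriv}-e)$ in~(\ref{epsB}) is exactly what is needed to keep the shared pieces inside $(0,\complTime{\cS'}{j}{\Mpriv})$, and the bound $|\varepsilon|<m_0(e-s)$ keeps them non-empty.

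Next I would do the bookkeeping for the total weighted overlap. Since $j$ occurs in $\cS$ on $m_0$ shared processors in a single interval $(s,e)$ (no splits) and in $\cS'$ on the same processors in $(s,e+\varepsilon/m_0)$, we have
\[
\sum_{z=1}^{m}\overlap{\cS'}{j,\Mshared_z}(\w{j}-\cm{z})
=\sum_{z=1}^{m}\overlap{\cS}{j,\Mshared_z}(\w{j}-\cm{z})
+\frac{\varepsilon}{m_0}\sum_{z=1}^{m_0}(\w{j}-\cm{z}).
\]
On the other hand Lemma~\ref{lem:xi-epsilon} (with $j_0=j$ and $m_0^+=m_0$) gives
\[
\tct{\cS'_{-j}}=\tct{\cS}+\Delta(\cS,e,\varepsilon)-\sum_{z=1}^{m}\overlap{\cS}{j,\Mshared_z}(\w{j}-\cm{z})-\frac{\varepsilon}{m_0}\sum_{z=1}^{m_0}(\w{j}-\cm{z}).
\]
Adding these, all terms involving $j$'s overlaps cancel and we obtain $\tct{\cS'}=\tct{\cS}+\Delta(\cS,e,\varepsilon)$. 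Corollary~\ref{cor:rate} then turns this into $\tct{\cS'}=\tct{\cS}+\varepsilon\cdot R(\cS,e)$, as announced.

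The main obstacle, as usual in this section, is ensuring the modification is truly doable for both signs of $\varepsilon$: this is why the last entry of the minimum in~(\ref{epsB}) is present, forcing the $q+1$ step modification's condition~(\ref{eps}) to hold, while the first two entries guarantee the local feasibility of $j$'s pieces. Once these bookkeeping constraints are verified, the cancellation of the two $\tfrac{\varepsilon}{m_0}\sum_{z=1}^{m_0}(\w{j}-\cm{z})$ contributions — one coming from the extra overlap $j$ picks up from extending its piece, and the opposite one appearing in Lemma~\ref{lem:xi-epsilon} because $j$'s piece boundary was the very point used to start the shift — is the clean algebraic reason why the change in objective depends only on the rate $R(\cS,e)$, and optimality forces it to be zero.
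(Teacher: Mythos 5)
Your proposal is correct and follows essentially the same route as the paper's own proof: apply the $j$-synchronization, use Lemma~\ref{lem:xi-valid} and the bounds in~\eqref{epsB} for feasibility, combine the overlap change of $j$ with the formula of Lemma~\ref{lem:xi-epsilon} so that the $\frac{\varepsilon}{m_0}\sum_{z=1}^{m_0}(\w{j}-\cm{z})$ terms cancel, and invoke Corollary~\ref{cor:rate} to get $\tct{\cS'}=\tct{\cS}+\varepsilon\cdot R(\cS,e)$, from which optimality forces $R(\cS,e)=0$ by choosing the sign of $\varepsilon$. The only difference is presentational: you spell out the cancellation that the paper leaves implicit.
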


\begin{proof}
By Lemma \ref{lem:xi-epsilon}, the $q+1$ step modification called by the $j$-synchronization is doable since (\ref{epsB}) implies (\ref{eps}). Also no more than $m_0(e-s)>|\varepsilon|$ of $j$ can be moved from the $m_0$ shared processors in the interval $I=(s,e)$ to the job's private processor, and no more than $\frac{m_0}{m_0+1}\left(\complTime{\cS}{j}{\Mpriv}-e\right)>|\varepsilon|$ can be moved from the job's private processor to the $m_o$ shared processors. Thus the choice of $\varepsilon$ guarantees that $\cS'$ is feasible.
For each shared processor $i\in\{1,\ldots,m_0\}$, the execution time of $j$ on $\Mshared_i$ changes by $\varepsilon/m_0$ (if $\varepsilon<0$, then the execution time decreases, otherwise it increases).
Hence, for each such $i$,
$\overlap{\cS'}{j,\Mshared_i}=\overlap{\cS}{j,\Mshared_i}+\varepsilon/m_0$.
We can hence represent the total weighted overlap of $\cS'$ as follows:
\begin{align*}
\tct{\cS'}  & = \tct{\cS'_{-j}}+\sum_{i=1}^{m_0}\overlap{\cS'}{j,\Mshared_i}\cdot(\w{j}-\cm{i}) \\
            & = \tct{\cS'_{-j}}+\sum_{i=1}^{m_0}\overlap{\cS}{j,\Mshared_i}\cdot(\w{j}-\cm{i})+\frac{\varepsilon}{m_0}\sum_{i=1}^{m_0}(\w{j}-\cm{i}).
\end{align*}
By Lemma~\ref{lem:xi-epsilon},
\[\tct{\cS'} = \tct{\cS}+\Delta(\cS,t,\varepsilon).\]
Note that in the above $\overlap{\cS}{j,\Mshared_i}=0$ for each $i>m_0$ and hence $\sum_{i=1}^m\overlap{\cS}{j,\Mshared_i}(\w{j}-\cm{i})=\sum_{i=1}^{m_0}\overlap{\cS}{j,\Mshared_i}\cdot(\w{j}-\cm{i})$.
Thus, by Corollary~\ref{cor:rate}, $\tct{\cS'}=\tct{\cS}+\varepsilon R(\cS,t)$. By definition of $j$-synchronization we have $R(\cS,t)=0$ since otherwise $\tct{\cS'}>\tct{\cS}$ which contradicts
our choice of $\cS$. This proves the lemma.
\end{proof}

We are now ready to complete the proof that there exist optimal schedules that are synchronized.
\begin{proof}[Proof of Theorem~\ref{thm:synchronized}]
Consider an optimal schedule $\cS$ that is processor-descending, sequential and without splits. Without loss of generality we may assume that $\cS$ has minimum makespan among all
optimal processor-descending, sequential schedules and without splits. Suppose for a contradiction that $\cS$ is not synchronized. Let $j$ be the last job that is not synchronized, and let a piece of $j$ be executed in the interval $I=(s,e)$ on shared processors in $\cS$. By Lemma \ref{Rzero}, $R(\cS,t=e)=0$. Do the $j$-synchronization with $\varepsilon<0$ and meeting the condition~\eqref{epsB}. For the resulting schedule we have $\tct{\cS'} = \tct{\cS}$ according to Corollary~\ref{cor:rate}. Moreover, $\cS'$ is  processor-descending, sequential schedule and without splits.  However, since $\varepsilon<0$, $\cS'$ is shorter than $\cS$ which contradicts our choice of $\cS$.
\end{proof}

\section{Conclusions and Open Problems} \label{sec:conclusions}

Our first open problem regards the complexity of the problem.
The complexity question remains open even for the single machine case, i.e., the $m=1$ case~\cite{DK17}.
Note however that the problem with \emph{SP} jobs mode (recall that this is the problem variant where each job may use at most one shared processor) is NP-complete in the strong sense~\cite{DK16}, and no approximation algorithm with guaranteed worst case ratio is know for the problem.
The structural characterization shown in this paper for the \emph{MP} job mode (recall that this is the problem variant where each job may use many, possibly all, shared processor simultaneously) indicates, intuitively speaking, that in this mode schedules for $m>1$ shared processors `resemble' schedules on a single shared processor in the sense that in both cases the jobs that appear on the shared processors have certain ordering: once one job finishes on all shared processors it uses, another job starts exclusively using all shared processors it requires. However the numbers of shared processors used by the jobs may be different since the jobs later in the sequence may consider some shared processor too expensive to use.
Therefore, with respect to that the \emph{SP} and \emph{MP} modes behave very differently.

Our approximation ratio of $\frac{1}{2}+\frac{1}{4(m+1)}$ obtained for arbitrary number $m\geq 1$ of  shared processors improves the previously known approximation ratio, see~\cite{DK17}, from $\frac{1}{2}$ to $\frac{5}{8}$ in the single shared processor case.
We leave an open question whether the approximation ratio provided by Theorem~\ref{thm:approximation} is the best possible, both for multiple shared processors and for a single shared processor.

\section*{Acknowledgements}
This research has been supported by the Natural Sciences and Engineering Research
Council of Canada (NSERC) Grant OPG0105675.

\bibliographystyle{plain}
\bibliography{references}
\end{document}